\documentclass[12pt]{article}

\usepackage{jheppub}
\makeatletter
\def\@fpheader{\relax}
\makeatother

\usepackage{amsthm}
\usepackage{amsthm,amsmath,amssymb,mathtools}
\usepackage{mathrsfs}
\usepackage{bm}
\usepackage{bbm}
\usepackage{graphicx}
\usepackage{makecell}
\usepackage{float}
\usepackage{caption}
\usepackage{subcaption}
\usepackage[most]{tcolorbox}
\usepackage[mathscr]{euscript}
\usepackage{dsfont}
\usepackage[mathscr]{euscript}
\usepackage{mathrsfs}
\usepackage{hyperref}
\usepackage{tikz}
\newtheorem{theorem}{Theorem}[section]

\newtheorem{proposition}[theorem]{Proposition}  

\theoremstyle{definition}

\theoremstyle{remark}

\def\bea{\begin{eqnarray}}
\def\eea{\end{eqnarray}}
\def\cZ{\mathcal{Z}}
\def\cO{\mathcal{O}}
\def\cP{\mathcal{P}}
\def\cN{\mathcal{N}}
\def\cS{\mathcal{S}}
\def\Coeff{ {\rm Coeff } }
\def\simexp{ \underset{\mathrm{(e.t.)}}{\sim} } 

\def\Aut{ {\rm Aut} }

\def\cR{\mathcal{R}}

\newenvironment{nohyphens}{%
	\hyphenpenalty=10000
	\exhyphenpenalty=10000
	\sloppy %
}{\par}

\title{ Critical dimensions and small  cycle dominance  from all-orders asymptotics of $d$-matrix theory}

\author[a]{Yang Lei}
\author[b]{\!\!, Sanjaye Ramgoolam}

\affiliation[\,a]{Institute for Quantum Science \& School of Physical Science and Technology,  \\ Soochow University, Suzhou 215006, P.R.~China}
\affiliation[\,b]{Centre for Theoretical Physics, Department of Physics and Astronomy, \\Queen Mary University of London, UK}

\emailAdd{leiyang@suda.edu.cn}
\emailAdd{s.ramgoolam@qmul.ac.uk}

\abstract{\begin{nohyphens}
Supersymmetric sectors of $ \cN =4$ super-Yang-Mills theory motivate the study of the partition function for the counting of gauge-invariant functions of  $d=2,3$ matrices transforming under the adjoint action of $U(N)$. The partition function $ \mathcal{Z}_d ( x) $ in the large $N$ limit has a  known Hagedorn phase transition at $ x = d^{ -1} $ which provides a simple model for the phase structure of the thermal partition function of SYM. We study the all-orders asymptotic expansion of  $ \mathcal{Z}_d (x) $ based on a  geometric picture of concentric circles of poles in the complex plane accumulating in a natural boundary at $|x| =1$. We find that the order by order structure has a precise combinatorial interpretation organized in terms of increasing cycle size of permutations arising in the enumeration of the invariants. We refer to this organization as small-cycle dominance, and find that it extends to refined versions of the partition functions depending on several complex variables. An analysis of the coefficients in the asymptotic expansion of $ \mathcal{Z}_d ( x ) $ using the modular property of the Dedekind eta function reveals that the asymptotic expansion is actually convergent for $d\ge d_{ \rm  crit } =  13$.  A fermionic version of $ \mathcal{Z}_d ( x ) $ has an analogous critical dimension of $ d_{ \rm crit} = 7$. 
This distinction indicates that the partition functions  of the  matrix models can be completely reconstructed from their high-energy (UV) limit for $d\ge d_{ \rm crit}$ whereas additional  input is required to reconstruct the exact coefficients of the low-energy (IR) expansion for $2\le d \le d_{ \rm crit } -1 $. 
	\end{nohyphens}
}

\date{}

\begin{document} 

\hfill{QMUL-PH-26-13}

	\maketitle
	
\section{Introduction}\label{sec:intro}

The AdS/CFT correspondence \cite{Maldacena:1997re, Gubser:1998bc,Witten:1998qj} stands as a cornerstone of modern theoretical physics, providing a profound holographic duality between $\mathcal{N}=4$ super-Yang-Mills (SYM) theory with a $U(N)$ gauge group and type IIB string theory in the $AdS_5 \times S^5$ spacetime. 
It offers a remarkable dictionary that translates intractable non-perturbative phenomena in quantum gravity into well-defined computations of observables within quantum field theories and their lower-dimensional quantum mechanical reductions. By systematically analyzing these field-theoretic observables—such as partition functions and the spectra of BPS operators—in the large $N$ limit and beyond, one can extract precise information regarding the quantum physics of gravitons, extended string states, and D-branes in the bulk geometry. 
Consequently, this holographic framework provides a rigorous mathematical testing ground for understanding the dynamics of extended objects and black hole microstates.

A consequence of the holographic dictionary is the correspondence between the thermal phase structures in quantum gravity and the dual gauge theory. 
Specifically, the Hawking-Page phase transition—a geometric transition separating a thermal gas in Anti-de Sitter space from a stable macroscopic black hole—maps  to the confinement/deconfinement phase transition of the gauge theory formulated on a compact manifold \cite{HawkPage,witten2}, 
where the exponential proliferation of gauge-invariant operators manifests as a Hagedorn transition. 
The core physics of this deconfinement/Hagedorn transition is captured by a simpler framework: the $U(N)$ gauged quantum mechanics of multi-matrix models governed by a harmonic oscillator potential \cite{Sundborg:1999ue,Aharony:2003sx}.
Within this context, the two-matrix system plays a  distinguished role, as its quantum states are in  one-to-one correspondence with the quarter-BPS operators belonging to the $SU(2)$ scalar sub-sector of free $\mathcal{N}=4$ SYM \cite{Bianchi:2006ti}. 
Understanding the exact degeneracy of this two-matrix system therefore provides a highly tractable yet rigorous window into the stringy and thermal properties of the dual geometry.

Beyond capturing macroscopic phase transitions, multi-matrix quantum mechanical models—also referred to as $d$-matrix theories—serve as fruitful laboratories for dissecting the fine-grained spectrum and dynamics of the AdS/CFT correspondence.
For instance, these models encode the integrability properties of $\mathcal{N}=4$ SYM \cite{Beisert:2004ry} in the $N=\infty$ planar limit, effectively translating the complex operator mixing problem into tractable spin-chain Hamiltonians.
Furthermore, $d$-matrix frameworks provide the essential algebraic scaffolding required to construct orthogonal bases for the Hilbert space which are critical for describing the finite-$N$ dynamics and mixing of giant graviton fluctuations \cite{Corley:2001zk,KR,BHR1,Rob1,Rob2,BHR2,EHS}. 
They also allow for the systematic isolation of near-BPS closed sub-sectors, enabling the study of non-relativistic, strongly coupled corners of the holographic dictionary with analytic controls \cite{Harmark:2007px,Harmark:2014mpa,Harmark:2019zkn,Baiguera:2020jgy,Baiguera:2020mgk,Baiguera:2021hky,Baiguera:2022pll}.
Furthermore, a zero-charge sector of the 2-matrix quantum mechanics displays negative specific heat capacity \cite{OConnor:2024udv}, which is of interest in the context of small black holes in AdS \cite{Han16,Berenstein:2018lrm}.

The $SU(2)$ subsector of $ \cN=4$ SYM  with $U(N)$ gauge group  consists of two  $ N \times N$ complex matrix scalars $X,Y$ transforming in the adjoint of  gauge group. The counting of the polynomial holomorphic gauge invariants, which are quarter BPS in the free gauge theory limit,  can be organised by the degrees $m,n$ in the two matrices. For $m +n \le N$, the counting is independent of $N$ and the two-variable generating function is
\cite{Dolan:2007rq}
\begin{equation}\label{eq:refined-coefficnet-partition}
	\mathcal{Z}(x,y)= \prod_{i=1}^\infty \frac{1}{1-x^i -y^i}  \,.
\end{equation}
This counting is also applicable for polynomial gauge invariants of two hermitian matrices. In the unrefined limit where $x=y$, this generating function reduces to the $d=2$ case of a broader class of $d$-matrix partition functions \cite{willenbring2007stable}:
\begin{equation}\label{eq:partition-unrefine}
\mathcal{Z}_d(x) =  \frac{1-d}{(d;x)_\infty} = \prod_{i=1}^{\infty} \frac{1}{1-dx^i} = \sum_{K=0}^\infty Z_d(K) x^K \,,
\end{equation}
where  we have written the partition function in terms of $q$-Pochhammer symbol. This gives the large $N$ counting of homolorphic polynomial gauge invariants of $d$ complex matrices or alternatively the large $N$ counting of general polynomial gauge invariants of $d$  hermitian matrices, transforming in the adjoint of $U(N)$. 
While the weighted partition numbers $Z_2(K)$ have been explored numerically for the first few values—leading to the empirical conjecture that the leading growth scales as $Z_2(K)\sim 3.46\times 2^K$ \cite{oeisA070933}, this sequence remains largely understudied. 
Specifically, there currently exists neither an exact analytic formula for the asymptotic coefficients nor a systematic derivation of their (non)-perturbative subleading corrections.
On the contrary, the best studied example of exact degeneracy computation is the integer partition number $Z_1(K)$, which corresponds to the coefficients of the partition function $\mathcal{Z}_1(x)$.
The asymptotic growth of $Z_1(K)$ is captured by Hardy-Ramanujan-Rademacher formula \cite{241Rademacher}.
This celebrated formula provides an exact, convergent series to compute the integer partition numbers. Until now, an analogous analytic structure for the weighted partition numbers $Z_2(K)$ has remained  unknown.

In this paper, we establish that the degeneracy $Z_2(K)$ admits an all-orders asymptotic expansion at large $K$. 
Specifically, we show that:
\begin{equation}\label{eq:degeneracyZ2-0}
Z_2(K) \sim \sum_{n=1}^\infty \sum_{j=0}^{n-1} c_{n;j} \omega_n^{-jK} 2^{\frac{K}{n}} \,,
\end{equation}
where $\omega_n^j$ are $n$-th roots of unity, and the residue coefficients $c_{n;j}$ are constants to be determined in this paper.   
Other main results can be summarized as follows:
\begin{itemize} 
\item We demonstrate that for any integer $d \ge 2$, the degeneracy $Z_d(K)$ admits an all-orders asymptotic expansion of the form:
\begin{equation}\label{Zdkasymp}  
Z_d ( K ) \sim  \sum_{ n=1}^{ \infty } Z_{d,n} ( K ) \,.
\end{equation}
as the generalization of \eqref{eq:degeneracyZ2-0}.
 following the standard Poincaré definition of asymptotic series. The proof of this result relies on the iterative application of key lemmas in analytic combinatorics \cite{FlajoletSedgewick:2009}, systematically subtracting the singular polar parts of $\mathcal{Z}_d(x)$ localized on concentric circles of increasing radii. 
We also describe the broader applicability of this procedure by extending it to generating functions of weighted partitions $\mathcal{Z} (x; \vec{w}) $ where $w_i$ represents the weight assigned to each part of the partitions.

\item We display a remarkable transition in the analytic properties of the $d$-matrix theory.  
For $d \le 12$, the magnitude of the expansion coefficients $Z_{d,n}(K)$ increases exponentially at large $n$, rendering the series formally divergent (though asymptotically valid). 
However, for $d \ge 13$, these coefficients decrease exponentially, meaning that the asymptotic expansion becomes absolutely convergent in this regime. 
We interpret this as indicating that the all-orders  high-energy (UV) information is sufficient to reconstruct the degeneracy counting function for $d\ge 13$ but needs extra input to resolve the ambiguity for $2\le d\le 12$.

\item 
We provide a  combinatorial interpretation for the expansion order $n$. Building upon known expressions for $Z_d(K)$ as a sum over permutations \cite{Pasukonis:2013ts}, 
we show that the index $n$ in the asymptotic series organizes  contributing partitions, describing cycle structures of permutations, according to their increasing minimal cycle lengths.
This phenomenon, which we term asymptotic small cycle dominance, conceptually mirrors the all-orders asymptotics observed in the counting of tensor model observables \cite{BGSR1,BenGeloun:2017vwn,BenGeloun:2021cuj}.
Furthermore, we demonstrate that this small cycle dominance methodology can be applied to extract the asymptotics of the refined partition function $\mathcal{Z}(x, y)$. An important point is that this small-cycle organisation of the asymptotics refers to the cycle decomposition of {\it equivalence permutations  } which relate different {\it contraction permutations } producing the same multi-trace.  The distinction is explained in \eqref{sec:reviewcount}. The cycles of the contraction permutations are the traces in the matrix description. For fixed cycle structures of the equivalence permutations, there is a distribution of trace structures, which is elucidated  in section \ref{sec:trace-picture}. 

\end{itemize} 

The remainder of this paper is organized as follows. 
In Section \ref{sec:complexnalysis}, we use the pole structure of $ \cZ_2(x )$ within the unit disc to derive the all-orders asymptotic formula for the sequence of coefficients $ Z_2( K ) $  in the Taylor  expansion of $ \cZ_2(x)$
around the origin. Section \ref{ssec:general-d-theory} generalizes this derivation to the 
partition function $ \cZ_d ( x ) $ of the bosonic $d$-matrix theory, and finds the critical value $ d = 13$ above which the asymptotic expansion becomes convergent. We also extend the discussion to  a fermionic $d$-matrix theory finding convergence for $ d \ge 7$. In section \ref{sec:MTconeq}, we provide a combinatorial perspective on these asymptotic expansions.  We identify the specific permutation configurations that contribute to the leading and sub-leading orders of the degeneracy, demonstrating that these contributions are naturally organized by the size of their minimal cycles. 
We also explain how  the small-cycle dominance mechanism extends  to refined partition functions, leaving a more detailed treatment of the connection to multi-variable complex analysis for the future. 
Section \ref{sec:CircSings} extends our analysis to combinatorial partition models with general weight sequences.  
We present concrete examples illustrating a scenario where the small cycle dominance of the kind that applies to the $d$-matrix model extends simply, as well as a case where it must be modified to a variant where small cycle dominance sets in after a re-organisation at very small cycle lengths. In section \ref{sec:trace-picture}, we will describe 
the distribution of trace-structures for fixed cycle structures of the equivalence permutations. This is given in terms of cycle indices of wreath-product groups, whose structure depends on the fixed cycle structures. 
Finally, we conclude with a discussion of physical implications and future directions in Section \ref{sec:discussion}.

\section{All orders asymptotic expansion  of  $\mathcal{Z}_2(x)$}
\label{sec:complexnalysis}

In this section, we apply complex analysis to the unrefined $2$-matrix partition function $\mathcal{Z}_2(x)$.
The growth of the coefficients $Z_2(K)$ is intrinsically linked to the singularity structure of $\mathcal{Z}_2(x)$ on the complex plane. 
According to the foundational principles of singularity analysis \cite{FlajoletSedgewick:2009}
\begin{itemize}
\item \textbf{Principle 1}: The location of the singularities determines the exponential growth rate of the coefficients. Specifically, the dominant singularities—those closest to the origin—define the radius of convergence $R$. 
By Pringsheim's theorem, the coefficients of the Taylor expansion satisfy
\begin{equation}
	[x^n] \mathcal{Z}(x) \sim \frac{1}{R^n} \,.
\end{equation}
For a complete analysis of the relevant contents, see Theorem (IV.7) of \cite{FlajoletSedgewick:2009}.
\item \textbf{Principle 2}: The algebraic or transcendental nature of the singularity (e.g., poles, branch points, or essential singularities) determines the sub-exponential factors, such as power-law corrections or logarithmic fluctuations at large orders.
\end{itemize}

A classic paradigm for extracting asymptotics in partition theory, such as for the integer partition $Z_1(K)$, involves exploiting the modular properties of $\mathcal{Z}_1(x)$ under $\mathrm{SL}(2,\mathbb{Z})$ transformation near essential singularities. 
While $\mathcal{Z}_1(x)$ possesses a natural boundary at $|x|=1$, where essential singularities are dense, the modular symmetry provides a powerful bridge to resolve the divergent behaviour near these points.
In contrast, the 2-matrix partition function $\mathcal{Z}_2(x)$, despite also having a natural boundary at $|x|=1$, exhibits a fundamentally different analytic structure in the interior of the unit disk. 
Its singularities are a sequence of discrete simple poles, including the dominant one. 
This distinct landscape suggests that, rather than relying on global modular symmetries, one can extract the asymptotics by summing the local contributions from these poles—a method akin to a truncated fractional expansion.

To circumvent the complications arising from the natural boundary, we adopt a strategy grounded in the local behaviour of the singularities. By approximating the partition function as a sum of contributions from its dominant poles—effectively a truncated fractional expansion—we can extract the degeneracy $Z_2(K)$
with remarkable precision. In the following, we demonstrate that this approach not only captures the leading asymptotic behaviour but also provides an efficient numerical scheme for higher order corrections.

\subsection{Asymptotics of unrefined $2$-matrix partition function}
\label{ssec:unrefined-2-matrix}

Before focusing on the specific $2$-matrix model, let us consider a general partition function $\mathcal{Z}(x)$ whose singularities in the complex plane consist of a sequence of discrete simple poles. Suppose these poles are organized into concentric layers according to their moduli, where the $n$-th layer lies on a circle of radius $r_n$, with $r_1<r_2<\cdots< 1$.
For each layer $n$, let the set of singularities be $\{ x_{n;j}\}_{j=0}^{t_n-1}$ where $t_n$ denotes the number of poles on the radius $r_n$.

We can systematically resolve these singularities by defining a sequence of partial fraction functions:
\begin{equation}\label{eq:general-Fn} 
	\mathcal{F}_n(x) = \sum_{j=0}^{t_n-1} \frac{c_{n;j}}{1 - x/x_{n;j}} \,, 
\end{equation}
where the coefficients $c_{n;j}$ are related to the residues of $\mathcal{Z}(x)$ at $x=x_{n;j}$ by 
\begin{equation}
	\text{Res}(\mathcal{Z}(x), x=x_{n;j}) = - c_{n;j} x_{n;j} \,.
\end{equation}
This construction allows for a step-by-step analytic continuation of the partition function. Specifically, the truncated sum
\begin{equation}\label{eq:def-remainder-truncation}
	\mathcal{R}_M(x) = \mathcal{Z}(x) - \sum_{n=1}^M 	\mathcal{F}_n(x)  
\end{equation}
is regular within the disk $|x|<r_{n+1}$.
By iteratively subtracting these pole contributions, one effectively increases the domain of convergence, extending the analytic description of the function towards its natural boundary at $|x|=1$.
This is pictorically dictated in Figure \ref{Fig:singularity}.
\begin{figure}\centering
\begin{tikzpicture}[scale=2]
	\draw[->] (-1.5,0) -- (1.5,0) node[right] {$x$};
	\draw[->] (0,-1.5) -- (0,1.5) node[above] {$y$};
	\draw (0,0) node[below left] {$O$};
	\def\rone{0.3}    
	\def\rtwo{0.6}   
	\draw[thick] (0,0) circle (1);        
	\draw[dashed] (0,0) circle (\rtwo);  
	\draw[dashed] (0,0) circle (\rone);   
	\fill[red] (45:\rone) circle (2pt) node[above right] {$r_1$};
	\fill[blue] (135:\rtwo) circle (2pt) node[above left] {$r_2$};
\end{tikzpicture}
\caption{A meromorphic function with singularities (marked by red and blue points) of the simple poles at radius $r_1,r_2$ satisfying $r_1<r_2$.  Singularity subtraction defined by remainder function $\mathcal{R}_1$ in \eqref{eq:def-remainder-truncation} is regular within the disk of radius $r_2$.}
\label{Fig:singularity}
\end{figure}
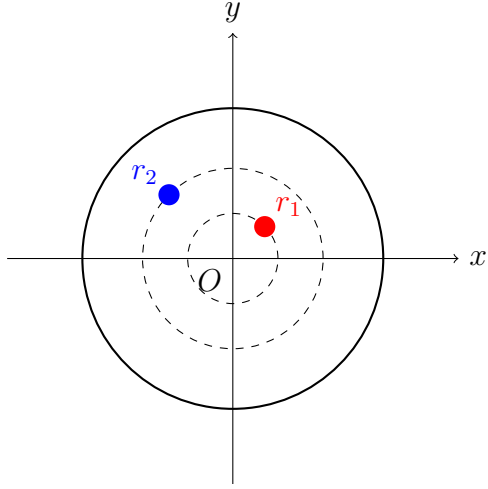

The $2$-matrix partition function $\mathcal{Z}_2(x)$ possesses a sequence of simple poles located at the roots of $1-2x^n=0$ for all positive integers $n$, such that $r_n = 2^{-\frac{1}{n}}$ and $t_n =n$. The poles are explicitly given by
\begin{equation}\label{eq:poles-Z2}
	x_{n;j} = 2^{-\frac{1}{n}} \omega_{n}^j, \qquad \omega_n =e^{\frac{2\pi i}{n}}, \qquad j=0,...,n-1 \,.
\end{equation}
For distinct integers $n$, the sets of poles $x_{n;j}$ are disjoint. 
In the $2$-matrix model under investigation, the sequence of partial fraction functions in \eqref{eq:general-Fn}
has the coefficients
\begin{equation}\label{eq:def-cnj}
	c_{n;j} =	\lim_{x \to x_{n;j}}  	\mathcal{Z}_2(x) (1-2^{\frac{1}{n}} \omega_n^{-j} x ) \,.
\end{equation}
The dominant singularity closest to the origin is $x_{1;0}= \frac{1}{2}$, which dictates the leading exponential growth rate of the coefficients \cite{Ramgoolam:2018epz,oeisA070933}. 
In a physical context, this singularity corresponds to the Hagedorn temperature $T_H =\frac{1}{\ln 2}$ \cite{Atick:1988si,Sundborg:1999ue,Polyakov:2001af,Aharony:2003sx}, a well-known limiting temperature in matrix models and string theory.

While the original Taylor series converges only for $|x|<\frac{1}{2}$, we can systematically extend the domain of the function. 
The remainder $\mathcal{R}_1(x)$ is regular at $x=\frac{1}{2}$ with its convergence radius now limited by the next dominant singularities at $r_2= \frac{1}{\sqrt{2}}$.
By iteratively subtracting the contributions $\mathcal{F}_n(x)$ for all $n$, 
we can analytically continue $\mathcal{Z}_2(x)$ throughout the unit disk. This procedure leads to the following pole expansion:
\begin{align} \label{eq:pole-expansion}
	\begin{split}
		\mathcal{Z}_2 (x) &\sim \sum_{n=1}^\infty  \sum_{j=0}^{n-1} \frac{c_{n;j}}{1-2^{\frac{1}{n}} \omega_n^{-j} x} = \sum_{K=0}^\infty \left[ 
		\sum_{n=1}^{\infty} \sum_{j=0}^{n-1} c_{n;j} \omega_n^{-jK} 2^{\frac{K}{n}}
		\right] x^K \,.
	\end{split}
\end{align}
In the second equality, we expanded the geometric series and collected terms at each order $x^K$. 
Consequently, the degeneracy $Z_2(K)$ at a given energy level $K$ is  given by
\begin{equation}\label{eq:degeneracyZ2}
	\boxed{Z_2(K) \sim \sum_{n=1}^\infty \sum_{j=0}^{n-1} c_{n;j} \omega_n^{-jK} 2^{\frac{K}{n}}} \,.
\end{equation}
This result is consistent with the general correspondence between the local behaviour of a function near its singularities and the asymptotics of its Taylor coefficients (see, e.g., Theorem IV.10 of \cite{FlajoletSedgewick:2009}).
While $T_H=\frac{1}{\ln 2}$ marks the physical Hagedorn temperature—the limiting temperature beyond which the canonical description of the matrix model breaks down—the higher-order poles associated with $T_n= \frac{n}{\ln 2}$ remain indispensable from a mathematical perspective. Even though these 'temperatures' lie beyond the formal domain of the system’s thermal equilibrium, their corresponding singularities encode the necessary sub-exponential corrections required to resolve the exact degeneracy $Z_2(K)$ across all energy scales. 

The first few coefficients $c_{n;j}$ in \eqref{eq:def-cnj} can be determined analytically:
\begin{align}\label{fstfew}
	\begin{split}
c_{1;0} = \frac{1}{\left(\frac{1}{2};\frac{1}{2}\right)_{\infty}}\,, \quad
c_{2;0} =  -\left(\frac{\sqrt{2}+1}{2} \right) \frac{1}{(\frac{1}{\sqrt{2}}; \frac{1}{\sqrt{2}})_{\infty}}\,  \quad
c_{2;1}  = \left(\frac{\sqrt{2}-1}{2} \right)   \frac{1}{( \frac{-1}{\sqrt{2}};  \frac{-1}{\sqrt{2}})_{\infty}}  \,.
	\end{split} 
\end{align}
For $0\le j\le n-1$, representative numerical values of $c_{n;j}$ are summarized in Table \ref{Table:cvalues}.
\begin{table}[H]
	\centering
	\begin{tabular}{|c|c|c|c|c|}
		\hline
$c_{n;j}$		& $j=0$ & $j=1$ & $j=2$ & $j=3$ \\
		\hline
$n=1$	& 3.46  &  &  & \\
		\hline
$n=2$	& -32.17 & 0.222 & & \\
		\hline
$n=3$	& 512.40 & $0.053-0.013 i$ & $0.053+0.013 i$ & \\
		\hline
$n=4$	& -10229 & -0.257 & $0.017-0.01i$ &  $0.017+0.01i $\\
\hline
	\end{tabular}
\caption{Numerical values of the residues  $c_{n;j}$. The column $j=0$  corresponds to the positive real root of  $1-2x^n=0$ for each $n$; the other columns correspond to the complex roots. }
\label{Table:cvalues}
\end{table}
We shall first highlight the pronounced numerical features of these coefficients, while deferring a rigorous analytic investigation of their asymptotic structures to Sections \ref{ssec:analysis} and \ref{ssec:general-d-theory}.
First of all, based on the numerical data in Table \ref{Table:cvalues}, the coefficients $Z_2(K)$ can be formally expressed as
\begin{equation}\label{eq:numer-Z2K-suble}
	Z_2(K) = 3.46 \times 2^K -32.17\times 2^{\frac{K}{2}} +0.222 \times (-1)^K\times 2^{\frac{K}{2}}+ \dots \,.
\end{equation}
The leading term is in excellent agreement with the OEIS sequence A070933 \cite{oeisA070933}, while the subsequent terms provide increasingly refined approximations.
To quantify this, we define the truncated asymptotic expansion of $Z_2(K)$ \eqref{eq:degeneracyZ2} at order $n=M$ as
\begin{equation}\label{eq:Z2-truncation}
	Z_2(K;M) =  \sum_{n=1}^M Z_{2,n}(K) \,.
\end{equation}
The $Z_{2,n}(K)$ are the coefficients the expansion of the partial fraction $\mathcal{F}_n(x)$ at the order $x^K$.
We compare exact $Z_2(K)$ with the truncation at the leading $n=1$, subleading $n=2$ and subsubleading terms $n=3$ in Figure \ref{fig:three-order-difference}, demonstrating the significant improvement in accuracy as more poles are included.
\begin{figure}[]
	\centering
	\includegraphics[width=0.7\linewidth]{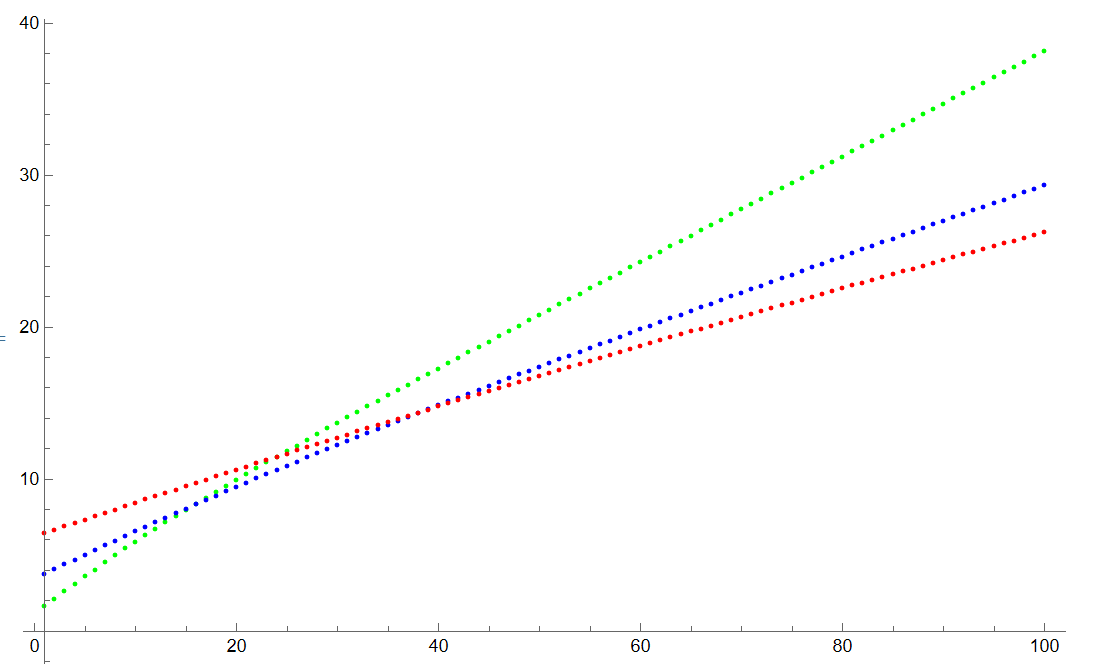}
	\caption{
The vertical axis is log of the error, i.e. $\ln|Z_2(K)- Z_2(K;M)|$, while the horizontal axis is the energy level $K$. 
The green, blue and red correspond to $M=1,2,3$ respectively. 
	}
	\label{fig:three-order-difference}
\end{figure}

We will provide numerical test of \eqref{eq:numer-Z2K-suble} in the appendix \ref{appen:numericalZ2}.
The magnitude $|c_{n;0}|$ associated with the positive real root grows rapidly with $n$, whereas coefficients for complex roots ($j \neq 0$) remain $\mathcal{O}(1)$ or smaller. Furthermore, the sign of $c_{n;0}$ alternates as $(-1)^{n-1}$. Consequently, the first subleading term is negative—a behaviour that stands in sharp contrast to the integer partition $Z_1(K)$ whose non-perturbative subleading corrections  (e.g., the $e^{\pi \sqrt{2K/3}}$ term in the Rademacher expansion \cite{rademacher1937convergent}) consistently possess positive prefactors. 
This sign alternation is deeply rooted in the inclusion-exclusion principle and the specific combinatorial structures explored in Sections \ref{ssec:analysis} and \ref{ssec:general-d-theory}.

Crucially, these subleading terms are non-perturbative: they do not represent standard power-law $1/K$ corrections—which typically correspond to logarithmic entropy corrections in gravity—but rather consist of exponentials with subleading growth rates.
This structure is reminiscent of the "root-of-unity" saddles encountered in black hole microstate counting \cite{Dijkgraaf:2000fq,Manschot:2007ha, Cabo-Bizet:2019eaf,ArabiArdehali:2021nsx,Jejjala:2021hlt,Aharony:2021zkr,Lei:2024oij,Jejjala:2022lrm}.

\subsection{Analytic asymptotic formula for $Z_2(K)$ coefficients}
\label{ssec:analysis}
While $Z_2(K)$ is formally expressed as a double summation over $n$ and $j$, its structure is primarily dictated by the asymptotic growth of the coefficients $c_{n;j}$ in the $n \to \infty$ limit. 
These contributions originate from the poles $1-2x^n = 0$, which accumulate toward the natural boundary at $|x|=1$. 
In terms of the complexified chemical potential $\tau$ (where $x = e^{2\pi i \tau}$), these poles are located at:
\begin{equation}\label{eq:tau-rationas}
	\tau_{n;j} = \frac{i \ln 2}{2\pi n} + \frac{j}{n}, \qquad j, n \in \mathbb{Z},.
\end{equation}
As we shall demonstrate, only the leading coefficients associated with the real positive roots ($j=0$) exhibit rapid growth, while the contributions from complex roots ($j \neq 0$) are exponentially suppressed.

For a fixed integer $n$, the coefficients $c_{n;j}$ defined in \eqref{eq:def-cnj} can be conveniently factorized into two components,   $c_{n;j}^+ (x_{n;j})$ and $c_{n;j}^- (x_{n;j})$, as:
\begin{equation}\label{eq:c-coeff-factor}
c_{n,j} =  \frac{1}{n} c_{n;j}^+ (x_{n;j}) c_{n;j}^- (x_{n;j}) \,. 
\end{equation}
The two factors take the explicit forms:
\begin{align}
	\begin{split}
c_{n;j}^+ (x_{n;j}) &= \left[ \prod_{k=1}^{n-1} (1-2x_{n;j}^k) \right]^{-1}\,, \qquad  \\
c_{n;j}^- (x_{n;j}) &=  \frac{1}{(x_{n;j}; x_{n;j})_{\infty}} = \mathcal{Z}_1(e^{2\pi i \tau_{n;j}})\,.
	\end{split}
\end{align}
The asymptotic behaviour of $c_{n;j}$ is therefore governed by the interplay between these two functions.

We first focus on $c_{n;j}^-$ which is the generating function for Euler partition numbers. 
This function obeys a modular transformation under the $S$-transformation \cite{Cardy:1986ie} (see also \cite{Lei:2024oij} for recent progress on its relation to the modular properties of elliptic Gamma functions):
\begin{equation}\label{eq:modular-dedekind}
\mathcal{Z}_1(e^{-2\pi i \frac{1}{\tau}})  = \frac{1}{\sqrt{\tau}} e^{-i\pi P(\tau)}  \mathcal{Z}_1(e^{2\pi i \tau}), \qquad P(\tau) =\frac{1}{12} \left( \frac{1}{\tau} + \tau -3\right) \,.
\end{equation}
In the regime where $\tau_{n;j} \to 0$ (as $n \to \infty$), the real part of the exponential factor dominates. 
The imaginary part only induces rapid oscillations of order one and does not affect the magnitude.
Utilizing \eqref{eq:modular-dedekind}, the leading asymptotic behaviour of $c^-_{n;j}$ is found to be:
\begin{equation}\label{eq:asymptoc-} 
	\left| c_{n;j}^- (x_{n;j}) \right| \sim \left| \sqrt{\tau_{n;j}} \exp \left( \frac{\pi i}{12 \tau_{n;j}} \right) \right| \sim \frac{(j^2 + \frac{\ln^2 2}{4\pi^2})^{1/4}}{\sqrt{n}} \exp \left[ \frac{n \ln 2}{24 (j^2 + \frac{\ln^2 2}{4\pi^2})} \right] \,. 
\end{equation}

To determine the asymptotics of $c^+_{n;j}$, we rewrite its product representation as an exponential of summation:
\begin{equation}
-\ln c_{n;j}^+ (x_{n;j}) =  \frac{n-1}{2}\ln2+ i\pi (n-1)(j+1) +  \sum_{m=1}^{\infty} \frac{1}{m} \frac{2^{m(\frac{1-n}{n})} -e^{\frac{2\pi i jm}{n}}}{2^{\frac{m}{n}} - e^{\frac{2\pi i j m}{n}}} \,.
\end{equation}
In the large-$n$ limit, this yields:
\begin{equation}\label{eq:asymptoc+}
c_{n;j}^+ (x_{n;j})  \sim (-1)^{(n-1)(j+1)} \exp \left[
n \frac{\ln 2}{48}    \frac{1-24 j^2}{j^2 + (\frac{\ln^2 2}{4\pi^2})} 
\right] \times \text{(phase)} \,.
\end{equation}
Combining \eqref{eq:asymptoc-} and \eqref{eq:asymptoc+}, the asymptotic growth of $c_{n;j}$ is summarized as:
\begin{equation}\label{eq:cnj-asymptotic}
	c_{n;j}(x_{n;j}) \sim \frac{(j^2+\frac{\ln^2 2}{4\pi^2})^{\frac{1}{4}} }{n^{\frac{3}{2}}} (-1)^{(n-1)(j+1)} \exp\left[\frac{1-8j^2}{16} \frac{n\ln 2}{j^2+ \frac{\ln^2 2}{4\pi^2}}  \right]  \times \text{(phase)}  \,.
\end{equation}
This formula reveals the distinct roles of the roots $x_{n;j}$: 
\begin{itemize}
\item \textbf{Real Positive Roots} ($j=0$):  The coefficient grows exponentially as
\begin{equation}
c_{n;0} (x_{n;0})\sim \sqrt{\frac{\ln 2}{2\pi}} \frac{(-1)^{n-1} }{ n^{\frac{3}{2}}}\exp\left(\frac{\pi^2}{4\ln2}n\right) \,.
	\end{equation}
in excellent agreement with Table~\ref{Table:cvalues}.
\item \textbf{Complex Roots}  $(j=\{1,\cdots,n-1\})$. 
Since $\frac{\ln 2}{2\pi}\sim 0.11 \ll j$ and $ 8j^2 \gg 1$, the exponential factor becomes a suppression term.
The decay rate is approximately
\begin{equation}
	|c_{n;j}(x_{n;j})|\sim \frac{\sqrt{j}}{n^{\frac{3}{2}}} \exp\left(- \frac{\ln 2}{2} n\right)\,, \qquad j\neq0  \,,
\end{equation}
as confirmed numerically.
\end{itemize}
Consequently, the summation for $Z_2(K)$ is dominated by the contributions from real roots.

\section{Critical dimensions in $d$-matrix theory}
\label{ssec:general-d-theory}

The $2$-matrix theory governing the $SU(2)$ subsector of $\mathcal{N}=4$ SYM naturally generalizes to larger closed subsectors. For instance, the $SU(2|3)$ sector constitutes a decoupled $1/8$-BPS subsector comprising three scalars and two chiral fermions \cite{Beisert:2004ry,Harmark:2007px}. To develop the analytic tools required to systematically study these extended configurations, this section generalizes our asymptotic framework to bosonic and fermionic $d$-matrix theories respectively.

\subsection{Bosonic $d$-matrix theory}
\label{sec:cridbos}

Our analysis extends naturally to the unrefined limit of the $d$-matrix theory \cite{Berenstein:2018lrm,Rob1,OConnor2023TraceRelations} ($d>1$) whose partition function is given by:
\begin{equation}\label{eq:d-matrix-partition}
	\mathcal{Z}_d(x) =\prod_{i=1}^\infty \frac{1}{1-d x^i}  = \sum_{K=0}^\infty Z_d(K) x^K \,.
\end{equation}
These models represent the $N=\infty$ limit of $d$ free scalars transforming in the vector representation of O$(d)$ \cite{Aharony:2003sx}. 

As with the $d=2$ case studied in Sections \ref{ssec:unrefined-2-matrix} and \ref{ssec:analysis}, the singularities of \eqref{eq:d-matrix-partition} are simple poles located at:
\begin{equation}
x_{n;j} = d^{-\frac{1}{n }} \omega_{n}^j, \qquad j=0, \cdots,n-1 \,.
\end{equation}
Applying partial fraction decomposition, the degeneracy $Z_d(K)$ can be expressed as an asymptotic series with coefficients $Z_{d,n} (K) $ 
given below 
\begin{equation}\label{eq:general-d-subleading-ZK}
Z_d(K) \sim \sum_{n=1}^\infty Z_{d,n} (K) =  \sum_{n=1}^\infty \sum_{j=0}^{n-1} c_{n;j} \omega_n^{-jK} d^{\frac{K}{n}} \,.
\end{equation}
This representation implies the following error bound for the truncated sum:
\begin{equation}\label{eq:asymptotic-ZdK-BigO}
	\left|Z_d(K) - \sum_{n=1}^M \sum_{j=0}^{n-1} c_{n;j} \omega_n^{-jK} d^{\frac{K}{n}} \right| \le \mathcal{O} \left(\left|\sum_{j=0}^{M} c_{M+1;j} \omega_{M+1}^{-jK} d^{\frac{K}{M+1}} \right| \right) \,.
\end{equation}
In specific parameter regimes where the right-hand side of \eqref{eq:asymptotic-ZdK-BigO} becomes infinitesimally small as $M \to \infty$, the asymptotic series \eqref{eq:general-d-subleading-ZK} reduces to a convergent one, a feature we explore below.

The coefficients $c_{n;j}$ are computed following \eqref{eq:def-cnj} and retain the factorized form \eqref{eq:c-coeff-factor}, where:
\begin{align}
	\begin{split}
		c_{n;j}^+ (x_{n;j}) &= \left[ \prod_{k=1}^{n-1} (1-dx_{n;j}^k) \right]^{-1}\,, \qquad c_{n;j}^- (x_{n;j}) =  \frac{1}{(x_{n;j}; x_{n;j})_{\infty}} \,,\\
x_{n;j} &= e^{2\pi i \tau_{n;j}}, \qquad \tau_{n;j} = \frac{\ln d}{2\pi n} i + \frac{j}{n}	 \,.	
	\end{split}
\end{align}
As established in Section \ref{ssec:analysis}, the large-$n$ asymptotics of $c_{n;j}^{-} (x_{n;j})$ are determined by the modular properties of the function $\mathcal{Z}_1$ function, while those of $c^+_{n;j}(x;j)$ follow from an exponential summation formula. 
The resulting asymptotic expansions for these factors are:
\begin{align}
	\begin{split}
 |c_{n;j}^+(x_{n;j})| &\sim \exp\left[
-\frac{n \ln d }{24 \pi^2 (j^2+ \frac{\ln^2 d}{4\pi^2})} \Big(12j^2\pi^2 +F(d)\Big)
\right] \,, \\
\left|c_{n;j}^- (x_{n;j}) \right| &\sim \frac{(j^2 + \frac{\ln^2 d}{4\pi^2})^{1/4}}{\sqrt{n}} \exp\left[	\frac{n \ln d}{24 (j^2+ \frac{\ln^2 d}{4\pi^2})} \right]  \,.
	\end{split}
\end{align}
where $F(d)$ is defined as:
\begin{equation}\label{eq:def-F(d)}
F(d) = 3\ln^2 d+6\text{Li}_2\left(\frac{1}{d}\right) -\pi^2 \,,
\end{equation} 
satisfying $F(1)=0$ and Li$_2$ denotes the dilogarithm function \cite{zagier2007dilogarithm} . 
Consequently, the full coefficient $c_{n;j}$ behaves as 
\begin{equation}\label{eq:cnj-generald}
|c_{n;j}| \sim \frac{(j^2 + \frac{\ln^2 d}{4\pi^2})^{1/4}}{n^{\frac{3}{2}}}	\exp\left[ 
\frac{n \ln d }{24 \pi^2 (j^2+ \frac{\ln^2 d}{4\pi^2})}  \Big( \pi^2-F(d)  - 12j^2 \pi^2 \Big)
\right] \,.
\end{equation}

Several remarks are in order. First, for general $d$, the contribution from the positive real root is not necessarily exponentially growing. 
For sufficiently large $d$, even the $c_{n;0}$ term becomes exponentially suppressed if  $F(d)>\pi^2$.
Since $\ln^2 d$ is a monotonically increasing function,  
there exists a critical value $d_{\star}\approx 12.59$ above which this suppression occurs.
This value can be approximately estimated by neglecting the dilogarithm term $\text{Li}_2\left(\frac{1}{d}\right)$ as it is incomparable to other terms as $d$ gets large. 
The approximation value is then \footnote{This exponential factor notably appears in the Hardy-Ramanujan formula for the partition function \cite{HardyAsymptoticFI}
\begin{equation}
Z_1(K) \sim \frac{1}{4\sqrt{3}K} e^{\sqrt{\frac{2 K}{3}} \pi }
	\end{equation}} 
\begin{equation}
	3\ln^2 d \approx 2\pi^2\quad  \Rightarrow \quad d\approx e^{\sqrt{\frac{2}{3}} \pi} \approx 13.002 \,.
\end{equation}
For $d \ge d_\star$, the exponential suppression of $c_{n;0}$ as $n \to \infty$ implies that only the first few terms in \eqref{eq:general-d-subleading-ZK} contribute significantly, suggesting that the series may become convergent in this regime.
As a result,  the summation $\sum_{j=0}^{n-1} c_{n;j}$ is a convergent series as $c_{n;0}$ is exponentially suppressed in the large $n$ for $d\ge 13$.
Therefore,	the series expansion $Z_d(K)$ in \eqref{eq:general-d-subleading-ZK} is convergent for $d\ge 13$. As we will show in proposition \ref{prop:absconv-dge13} it is absolutely convergent with a uniform convergence property (proposition \ref{prop:UnifConv-dge13}) for discs of radius less than $1$ in the complex plane. 

\begin{proposition}\label{prop:Div-dle12}
For $2 \le d \le 12$ the asymptotic behaviour, the series expansion \eqref{eq:general-d-subleading-ZK} is divergent. 
\end{proposition}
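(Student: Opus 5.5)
We show that for fixed $K$ the terms $Z_{d,n}(K)$ do not tend to zero as $n\to\infty$; divergence then follows from the elementary necessary condition for convergence of a series. Since $d^{K/n}\to 1$ for fixed $K$, it is enough to prove that $\sum_{j=0}^{n-1} c_{n;j}\,\omega_n^{-jK}$ grows without bound along the integers $n$. The plan is to isolate the $j=0$ term, show that it grows exponentially when $2\le d\le 12$, and show that the terms with $j\neq 0$ cannot cancel it.

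\textbf{Step 1 (growth of $c_{n;0}$).} We use the factorization \eqref{eq:c-coeff-factor} with $\tau_{n;0}=\frac{i\ln d}{2\pi n}$, which is purely imaginary.
\begin{itemize}
\item For $c^-_{n;0}$ we apply the modular relation \eqref{eq:modular-dedekind} exactly. This gives $c^-_{n;0}=\sqrt{\tau}\,e^{i\pi P(\tau)}\,\mathcal{Z}_1(e^{-2\pi i/\tau})$ with $e^{-2\pi i/\tau}=e^{-4\pi^2 n/\ln d}$. The last factor is therefore $1+O(e^{-4\pi^2 n/\ln d})$.
\item For $c^+_{n;0}=\prod_{k=1}^{n-1}(1-d^{1-k/n})^{-1}$ we take logarithms and treat the sum as a Riemann sum. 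Euler--Maclaurin, with care at the endpoint $k=n-1$ where the factor $1-d^{1/n}\approx -\ln d/n$ is small, gives $\log|c^+_{n;0}| = -\frac{n}{\ln d}\int_0^{\ln d}\ln|1-e^{u}|\,du + O(\log n)$.
\item The integral evaluates via dilogarithms to the combination $F(d)$ of \eqref{eq:def-F(d)}.
\end{itemize}
This yields rigorous two-sided bounds $|c_{n;0}| = n^{O(1)}\exp\!\big[\frac{n(\pi^2-F(d))}{6\ln d}\big]$, which is \eqref{eq:cnj-generald} at $j=0$ made rigorous.

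\textbf{Step 2 (sign of the exponent).} We check that $\pi^2-F(d)=2\pi^2-3\ln^2 d-6\,\mathrm{Li}_2(1/d)>0$ for each $d=2,\dots,12$. The function is monotone decreasing in $d$, because its derivative is $-6\ln d/d+6\ln(1-1/d)\cdot(1/d)<0$. Hence it suffices to evaluate at $d=12$: there $3\ln^2 12\approx 18.53$ and $6\,\mathrm{Li}_2(1/12)\approx 0.51$, while $2\pi^2\approx 19.74$. The exponent is therefore positive, and $|c_{n;0}|\ge C\,e^{\alpha n}$ for some $\alpha>0$.

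\textbf{Step 3 (the $j\ne0$ terms).} This is the main obstacle, because it needs uniform control over $1\le j\le n-1$ of both factors, including $j$ of order $n$ where $\tau_{n;j}$ is not small. To avoid this, we bound the terms by index:
\begin{itemize}
\item Write $j/n$ in lowest terms as $a/b$.
\item Apply a general $SL(2,\mathbb{Z})$ transformation sending the cusp $a/b$ to $i\infty$. This bounds $|c^-_{n;j}|$ by $\exp\!\big[\frac{\pi\,\mathrm{Im}\,\tau}{12\, b^2|\tau-a/b|^2}\big]$-type quantities. Since $|\tau_{n;j}-a/b|=\frac{\ln d}{2\pi n}$, the exponent is of order $\frac{n\ln d}{b^2\ln^2 d}$.
\item Bound $|c^+_{n;j}|$ crudely by noting that the factors $|1-d x^k|$ are, on average over $k$, at least of order $1$. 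A Riemann-sum estimate along the $b$-periodic orbit of $x^k$ gives a subexponential or $O(e^{\beta n/b^2})$ bound.
\end{itemize}
For $b\ge2$ both exponents carry the factor $1/b^2$, which leaves a rate strictly below $\alpha$. If this fails for small $d$, we would first instead use the explicit formula \eqref{eq:cnj-generald}, which shows decay for $j\ne0$. Summing over the at most $n$ values of $j$ then gives $\sum_{j\neq0}|c_{n;j}|\le n\,e^{\alpha' n}$ with $\alpha'<\alpha$.

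Consequently $|Z_{d,n}(K)|\ge C e^{\alpha n}-n e^{\alpha' n}\to\infty$, the terms of the series do not tend to zero, and \eqref{eq:general-d-subleading-ZK} diverges for $2\le d\le 12$.
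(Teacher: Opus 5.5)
Your Steps 1 and 2 follow the paper's route ($j=0$ dominance read off from \eqref{eq:cnj-generald}) and make the $j=0$ part rigorous. The $S$-transformation is exact at $\tau_{n;0}=\frac{i\ln d}{2\pi n}$, and $\int_0^{\ln d}\ln(e^u-1)\,du=F(d)/6$ gives the rate $\alpha=\frac{\pi^2-F(d)}{6\ln d}$. One slip: the $d$-derivative of $2\pi^2-3\ln^2 d-6\,\mathrm{Li}_2(1/d)$ is $-\frac{6}{d}\ln(d-1)$. This is still $\le 0$ for $d\ge2$, so reducing to $d=12$ is fine.

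The genuine gap is Step 3, where the proposed mechanism does not work, for two reasons.

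\textbf{Wrong cusp for large $b$.} Sending the cusp $a/b=j/n$ to $i\infty$ gives your bound only if $\mathrm{Im}\,\gamma\tau_{n;j}=\frac{2\pi n}{b^2\ln d}$ is bounded below, i.e.\ $b\lesssim\sqrt n$. Otherwise the leftover factor $\mathcal{Z}_1(e^{2\pi i\gamma\tau})$ is uncontrolled. For $j=1$ (so $b=n$) your bound is $O(1)$, whereas $|c^-_{n;1}|$ grows like $\exp\bigl[\frac{\pi^2 n\ln d}{6(4\pi^2+\ln^2 d)}\bigr]$. That growth is governed by the cusp $0$, which gives no $1/b^2$ gain at all. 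The relevant cusp must come from a Farey dissection, not from $j/n$ itself.

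\textbf{Discarding the decay of $c^+$.} Bounding $c^+_{n;j}$ crudely throws away exactly what makes $j\neq0$ subdominant near the critical dimension, because $\alpha$ is small when $F(d)$ nearly cancels $\pi^2$. At $b=2$ the $c^-$ factor alone has rate $\frac{\pi^2}{24\ln d}$. This exceeds $\alpha$ whenever $F(d)>\frac{3\pi^2}{4}$, i.e.\ for $d=11,12$ (at $d=12$: $0.165$ versus $0.047$). Likewise $c^-_{n;1}$ has rate $\approx0.09$ at $d=12$. So ``both exponents carry $1/b^2$, leaving a rate below $\alpha$'' is false, and it fails for large $d$, not small $d$.

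Your fallback to \eqref{eq:cnj-generald} is the paper's own step. That formula is derived for $\tau_{n;j}\to0$, i.e.\ fixed $j$, and is not uniform in $j$. For $d=2$ and even $n$, $j=n/2$ gives $|c_{n;n/2}|\approx e^{n(\pi^2-F(4))/(24\ln 2)}\approx e^{0.74n}$. So the $j\neq0$ terms need not decay at all; only their suppression relative to $c_{n;0}$ holds.

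What closes the gap is keeping $c^+_{n;j}$ at each cusp. Average over the $b$-periodic orbit using $\prod_{\rho=0}^{b-1}(1-y\,e^{2\pi i\rho/b})=1-y^b$. For $j/n=a/b$ this gives $\log|c^+_{n;j}|=-\frac{nF(d^b)}{6b^2\ln d}+o(n)$, hence a total rate $\frac{\pi^2-F(d^b)}{6b^2\ln d}$. More generally, write $j/n=a/b+r/n$ with $a/b$ the Farey approximant of level $\sqrt n$ (so $br\in\mathbb{Z}$). Then
\[
\frac{1}{n}\log|c_{n;j}| = \ln d\left[\frac{\pi^2/3-\mathrm{Li}_2(d^{-b})}{b^2\,(4\pi^2r^2+\ln^2 d)}-\frac12\right]+o(1),
\]
whose $b=1$ case is the exponent in \eqref{eq:cnj-generald}.
\begin{itemize}
\item For $r\neq0$ the bracket is at most $\frac{1}{12}-\frac12<0$.
\item For $r=0$ and $b\ge2$ the rate is strictly below $\alpha$, because $F$ is nondecreasing ($F'(d)=\frac{6}{d}\ln(d-1)$) and $b^2\ge4$.
\end{itemize}
With the error terms made uniform through the Farey dissection, this yields $\sum_{j\neq0}|c_{n;j}|\le e^{(\alpha'+o(1))n}$ with $\alpha'<\alpha$. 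Your final step then goes through.
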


\begin{proof} Equation \eqref{eq:cnj-generald}
 implies that for $  2 \le d \le 12 $ the coefficients $ Z_{d,n} (K)$ grow exponentially with $n$. For the real root $j=0$ in the sum for $ Z_{d,n} (K)$,
\begin{equation}
	|c_{n;0}| \sim n^{-3/2}\exp\!\left[\frac{n}{6\ln d}(\pi^2-F(d))\right].
\end{equation}
Since $F(d)<\pi^2$ in this regime, the magnitude of $c_{n;0}$ increases exponentially with $n$. By contrast, when $j\neq0$ the exponent in \eqref{eq:cnj-generald} contains an additional negative contribution proportional to $12\pi^2 j^2$, implying exponential suppression of the corresponding coefficients at large $n$. 
Although the coefficient at level $n$ involves a sum over all $j$, the number of terms grows only linearly with $n$ while each $j\neq0$ contribution decays exponentially. 
Hence the sum over complex roots cannot cancel the exponentially growing real-root contribution. The dominant behaviour therefore arises from $c_{n;0}$, implying exponential growth of the coefficients with $n$, and hence the expansion \eqref{eq:general-d-subleading-ZK} diverges for $2 \le d \le 12$.
\end{proof}

Furthermore, numerical tests reveal that the monotonicity of $c_{n;0}$ varies with $d$. For $d \le 4$, $c_{n;0}$ increases monotonically with $n$. However, in the range $5 \le d \le 12$, $c_{n;0}$ initially decreases, reaches a minimum at a specific order $n_{\text{min}}$, and then increases. This intermediate region is where the exponential approximations in \eqref{eq:cnj-generald} are most applicable. The values of $n_{\text{min}}$ are summarized in Table \ref{table:Numerical-d-po-re}.
\begin{table}[H]
	\centering
	\begin{tabular}{|c|c|c|c|c|c|c|c|c|c|c|}
		\hline
	$d$	& $d\le 4$ & 5 & 6 & 7 & 8 & 9 & 10 & 11 & 12 &  $d\ge 13$ \\
		\hline
	$n_{min}$& - & 2 & 2 & 3 & 4 & 6 & 8 & 15 & 42 &  $+\infty$ \\
		\hline
	\end{tabular}
	\caption{
In the table we list the values of each $n_{min}$ which indicates $c_{n;0}$ reaches local minimal value for given $d$. 
For $d> d_\star$, the $c_{n;0}$ decreases monotonically. }
	\label{table:Numerical-d-po-re}
\end{table}

The scalars in the above  model transform in the vector representation of $O(d)$. 
It is interesting to note that analogous tensor models involving both $O(N)$ adjoint and $O(d)$ vector representations have been studied extensively \cite{Ferrari:2017ryl, Ferrari:2017jgw, Azeyanagi:2017mre, Carrozza:2020eaz}
 A double scaling limit of large $N,d$ of  the multi matrix model is discussed in \cite{Bonzom:2022yvc}. 
 Interestingly the large-dimension limit has also been considered in  general relativity \cite{Emparan:2013moa, Emparan:2020inr}.
Specifically, the critical dimension $d=12$ also arises in that context: black string to black hole phase transitions exhibit a critical threshold at $D=d+1 \approx 13.5$, distinguishing first order from higher-order Gregory-Laflamme instabilities \cite{Sorkin:2004qq, Suzuki:2015axa, Kol:2006vu, Harmark:2005pp, Kol:2004ww}. 
We will discuss this further in Section \ref{sec:discussion}.

The exponential decay of $|c_{n;j}|$ at large $n$ leads to the following propositions, which characterize the convergence and the analytic structure of the $d$-matrix partition function for $ d \ge 13$.  
\begin{proposition}\label{prop:absconv-dge13}
The series representation for $Z_d(K)$ in \eqref{eq:general-d-subleading-ZK} is absolutely convergent for $d\ge 13$.
\end{proposition}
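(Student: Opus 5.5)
We take the bound \eqref{eq:cnj-generald} on the residue coefficients as given and use it to dominate the double series $\sum_{n\ge1}\sum_{j=0}^{n-1}|c_{n;j}|\,d^{K/n}$ by a convergent series in $n$, for fixed $K$.

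\textbf{Removing the $K$-dependence.} Since $|\omega_n^{-jK}|=1$, we have $|Z_{d,n}(K)|\le d^{K/n}\sum_{j=0}^{n-1}|c_{n;j}|$. For fixed $K$ and every $n\ge1$, $d^{K/n}\le d^{K}$. This factor is a constant and plays no role in convergence.

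\textbf{The real root $j=0$.} By \eqref{eq:cnj-generald} with $j=0$,
\[
|c_{n;0}|\lesssim n^{-3/2}\exp\!\left[\frac{n}{6\ln d}\big(\pi^2-F(d)\big)\right].
\]
We must show $F(d)>\pi^2$ for every $d\ge13$. The function $F$ in \eqref{eq:def-F(d)} is increasing in $d$. Indeed, $F'(d)=\frac{6\ln d}{d}-\frac{6}{d}\cdot\frac{-\ln(1-1/d)}{1}$, using $\frac{d}{dd}\mathrm{Li}_2(1/d)=\frac{\ln(1-1/d)}{d}$. This is positive for $d\ge2$, because $\ln d>-\ln(1-1/d)$ there. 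At $d=13$ we have $3\ln^2 13\approx 19.74$ and $6\,\mathrm{Li}_2(1/13)\approx 0.47$, so $F(13)\approx 20.21-9.87>9.87$. Hence $F(d)>\pi^2$ for all $d\ge 13$, and $|c_{n;0}|\le C\,n^{-3/2}e^{-\kappa n}$ with $\kappa=\kappa(d)>0$.

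\textbf{The complex roots $1\le j\le n-1$.} The exponent in \eqref{eq:cnj-generald} is at most
\[
\frac{n\ln d\,(\pi^2-F(d)-12\pi^2 j^2)}{24\pi^2(j^2+\ln^2 d/4\pi^2)}\le -\frac{n\ln d}{2}\cdot\frac{j^2}{j^2+\ln^2 d/4\pi^2}\le -c\,n,
\]
with $c>0$ uniform in $j\ge1$. The prefactor is bounded by $(j^2+\cdots)^{1/4}\le C n^{1/2}$. Summing the $n-1$ terms gives $\sum_{j\ge1}|c_{n;j}|\le C n^{1/2}e^{-cn}$.

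\textbf{Conclusion.} Combining the two cases, $\sum_n\sum_j|c_{n;j}|d^{K/n}\le C d^K\sum_n\big(n^{-3/2}e^{-\kappa n}+n^{1/2}e^{-cn}\big)<\infty$, which proves absolute convergence.

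\textbf{Main obstacle.} The relation \eqref{eq:cnj-generald} is only a large-$n$ asymptotic ``$\sim$''. A rigorous proof needs a genuine upper bound valid for all $n\ge n_0$, uniformly in $j$. The largest risk is uniformity in $j$: the modular estimate for $c^-_{n;j}$ is sharp only when $\tau_{n;j}\to0$, and for $j$ comparable to $n$ we have $\tau_{n;j}\not\to0$.

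To handle this, I would first replace $j/n$ by its representative in $(-\tfrac12,\tfrac12]$, so that $j$ is interpreted as $\min(j,n-j)$; this is harmless by the symmetry $x\mapsto\bar x$. I would then bound $|c^-_{n;j}|=|1/(x;x)_\infty|$ directly by $1/(|x|;|x|)_\infty$-type estimates where the modular bound is weak. For $c^+_{n;j}$, I would bound $\prod_{k=1}^{n-1}|1-dx_{n;j}^k|^{-1}$ through a Riemann-sum comparison with $\exp\big(-\tfrac{n}{2\pi}\int\ln|1-d^{1-\theta}e^{i\phi}|\,\cdots\big)$, controlling the error terms explicitly. The finitely many $n<n_0$ contribute only a finite amount and are irrelevant to convergence.
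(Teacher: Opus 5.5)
Your proposal is correct to the same standard as the paper and follows essentially its route: both treat the large-$n$ form \eqref{eq:cnj-generald} as an effective bound beyond some $N_1$, use $F(d)>\pi^2$ for $d\ge 13$, absorb $d^{K/n}$ into a $K$-dependent constant, and dominate the tail by a geometric-type series. The only difference is that the paper handles $j\neq 0$ through $|c_{n;j}|\le|c_{n;0}|$, justified by monotonicity in $j$ of the asymptotic form, whereas you estimate the $j\ge 1$ terms separately.

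The uniformity-in-$j$ issue you flag is real for both arguments. Reducing $j$ to $\min(j,n-j)$ does not cure it, since \eqref{eq:cnj-generald} also fails for $j\approx n/2$. It nevertheless closes in one line, with no phase-dependent Riemann sums. The inequalities $|1-x^k|\ge 1-|x|^k$ and, for $k<n$, $|1-dx^k|\ge d|x|^k-1$ give $|c_{n;j}|\le|c_{n;0}|$ exactly. The bound $|c_{n;0}|\le C n^{-3/2}e^{-\lambda_d n}$ for every $n$ then follows rigorously from two facts. The first is the exact modular identity for $\mathcal{Z}_1(d^{-1/n})$. The second is the monotone comparison $\sum_{m=1}^{n-1}\ln(d^{m/n}-1)\ge n\int_0^1\ln(d^u-1)\,du-\ln(d-1)=\frac{nF(d)}{6\ln d}-\ln(d-1)$.
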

\begin{proof}
The asymptotic expansion of $c_{n;j}$ in \eqref{eq:cnj-generald} implies the existence of an integer $N_1$ and a positive constant $A$ such that for all $n>N_1$ \footnote{Numerical test shows the $N_1$ to achieve this exponential asymptotic approximation for $d=13$ could be of order $10^4$. For larger $d$, the exponential decay will be faster and the value of $N_1$ will be smaller. }
\begin{equation}\label{eq:M-cn;j-d}
	|c_{n;j}| \le  \frac{A}{n^{\frac{3}{2}}}	\exp\left[ 
	\frac{n \ln d }{24 \pi^2 (j^2+ \frac{\ln^2 d}{4\pi^2})}  \Big( -F(d) + \pi^2 - 12j^2 \pi^2 \Big)
	\right] \,.
\end{equation}
Since the right-hand side of \eqref{eq:M-cn;j-d} is a monotonically decreasing function of $j$, the principal coefficient ($j=0$) provides a universal upper bound:
\begin{equation}
 	|c_{n;j}|	 \le|c_{n;0}| \le  \frac{A}{n^{\frac{3}{2}}}	\exp\left(
 -\lambda_d n
\right), \qquad \lambda_d = \frac{F(d)-\pi^2}{6 \ln d} \,,
 \end{equation}
where $\lambda_d>0$ for $d\ge 13$. 
The series \eqref{eq:general-d-subleading-ZK} is then bounded by:
\begin{equation}\label{eq:ZdK-absolutebound}
\left|\sum_{n=1}^\infty \sum_{j=0}^{n-1} c_{n;j} \omega_n^{-jK} d^{\frac{K}{n}}\right| \le \sum_{n=1}^{N_1-1} \left|
\sum_{j=0}^{n-1} c_{n;j} \omega_n^{-jK} d^{\frac{K}{n}}
	 \right|  + \sum_{n=N_1}^\infty \tilde{A} \exp(	-\lambda_d n) \,,
\end{equation}
where the $n$-independent constant $\tilde{A}$ is
\begin{equation}
	\tilde{A} =  A	\exp\left(
	\frac{K}{N_1} \ln d - \frac{1}{2} \ln N_1
	\right) \,.
\end{equation}
 is an $n$-independent constant. Since the second term on the right-hand side is a convergent geometric-like series, \eqref{eq:general-d-subleading-ZK} is absolutely convergent.
\end{proof} 

We now establish a uniform convergence result for $ \cZ_d (x)$ in the regime $ d \ge 13$. 
\begin{proposition}\label{prop:UnifConv-dge13}
For $d\ge 13$, the partial fraction summation 
\begin{equation}\label{eq:sum-residue-Z}
S_M(x)= 1+\sum_{n=1}^M \sum_{j=0}^{n-1}\left( \frac{c_{n;j}}{1-d^{\frac{1}{n}} \omega_n^{-j}x} d^{\frac{1}{n}} \omega_n^{-j} x\right)
\end{equation}
 converges uniformly to a limit function $S_{d;\infty}(x)$ on any compact subset of the unit disk $|x|<1$.
\end{proposition}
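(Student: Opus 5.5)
Fix a compact $C\subset\{|x|<1\}$ and choose $\rho<1$ with $|x|\le\rho$ on $C$. The plan is a Weierstrass M-test on the sum over layers $n$. Write the $n$-th layer of \eqref{eq:sum-residue-Z} as
\begin{equation}
G_n(x)=\sum_{j=0}^{n-1}\frac{c_{n;j}\,d^{\frac{1}{n}}\omega_n^{-j}x}{1-d^{\frac{1}{n}}\omega_n^{-j}x}\,,
\end{equation}
so that $S_M=1+\sum_{n=1}^M G_n$. It suffices to find $N_0$ and a summable sequence $\mu_n$ with $\sup_{x\in C}|G_n(x)|\le \mu_n$ for all $n\ge N_0$. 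The finitely many terms with $n<N_0$ only need to be continuous on $C$ away from the poles and play no role in uniform convergence. Any points of $C$ lying on the poles $x_{n;j}$ with $n<N_0$ are excluded; equivalently, one works with the regularised remainder. The first step is the observation that the pole radii $r_n=d^{-1/n}$ increase to $1$. Hence there is $N_0$ with $r_n>\rho'$ for all $n\ge N_0$, where $\rho<\rho'<1$ is fixed. For such $n$ and $x\in C$,
\begin{equation}
|1-d^{\frac1n}\omega_n^{-j}x|\ \ge\ 1-\frac{\rho}{r_n}\ \ge\ 1-\frac{\rho}{\rho'}=:\delta>0,
\end{equation}
and moreover $|d^{1/n}\omega_n^{-j}x|\le d^{1/n}\le d$. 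Each summand is therefore bounded by $d\,\delta^{-1}|c_{n;j}|$, uniformly in $x\in C$ and in $j$.

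The second step imports the coefficient bound established in the proof of Proposition~\ref{prop:absconv-dge13}. From \eqref{eq:M-cn;j-d} and the monotonicity in $j$, for $n>N_1$ and all $j$ we have $|c_{n;j}|\le |c_{n;0}|\le A n^{-3/2}e^{-\lambda_d n}$, where $\lambda_d=(F(d)-\pi^2)/(6\ln d)>0$ for $d\ge13$. Summing over the $n$ values of $j$ gives
\begin{equation}
\sup_{x\in C}|G_n(x)|\le \frac{dA}{\delta}\, n^{-1/2}e^{-\lambda_d n}=:\mu_n ,\qquad n\ge \max(N_0,N_1+1),
\end{equation}
and $\sum_n\mu_n<\infty$. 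The M-test then yields uniform (indeed absolute) convergence of $S_M$ on $C$ to a limit $S_{d;\infty}$. Since each $G_n$ with $n\ge N_0$ is holomorphic on $|x|<\rho'$, the tail is holomorphic there as well. Applying this on every compact subset shows that $S_{d;\infty}$ is meromorphic on the unit disk, with simple poles exactly at the $x_{n;j}$.

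The main obstacle is the rigour of the coefficient bound. The estimate \eqref{eq:cnj-generald} is only an asymptotic statement, and it must be uniform in $j$. This matters especially for $j$ near $n$: there $\tau_{n;j}=j/n+i\ln d/(2\pi n)$ does not tend to $0$ but approaches the cusp at $1$, so the naive $S$-transform estimate should be applied after shifting $j\to j-n$, i.e. using $\min(j,n-j)$ in place of $j$. I would handle this by redoing the estimate with $\tilde j=\min(j,n-j)$. For $\tilde j\ge1$ the exponent carries the extra $-12\tilde j^2\pi^2$ term. For $\tilde j\ge1$ I would also try to bound $|c^+_{n;j}|$ directly, from $|1-dx^k|\ge d\,|x|^k-1$, or by comparison with $j=0$. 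This would confirm that every $|c_{n;j}|$ is bounded by a constant multiple of $|c_{n;0}|$ with $n$-independent constants. Only such a uniform domination is needed, and even a polynomial-in-$n$ loss is harmless against $e^{-\lambda_d n}$.
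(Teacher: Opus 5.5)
Your proof is correct and is essentially the paper's own argument. Like the paper, you push the poles of all layers $n>M_0$ outside the disc of radius $r<1$, bound the $n$-th layer uniformly by $n\,r\,|c_{n;0}|/(d^{-1/n}-r)$, and sum the tail using the exponential decay of $|c_{n;0}|$ for $d\ge 13$.

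The uniformity-in-$j$ concern you raise is real. The paper's fixed-$j$ estimate does not control $j/n$ near cusps $a/b$ with $b\ge 2$, and your $\min(j,n-j)$ shift would also miss these. It is settled by the elementary comparison you already hint at, applied to both factors: $|1-d\,x_{n;j}^k|\ge d^{1-k/n}-1=|1-d\,x_{n;0}^k|$ for $1\le k\le n-1$, and $|1-x_{n;j}^m|\ge 1-|x_{n;j}|^m$ for $m\ge 1$. Together these give $|c_{n;j}|\le |c_{n;0}|$ exactly for every $j$.
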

\begin{proof}
Consider the partial fraction sum $S_{d;M}(x)$ \eqref{eq:sum-residue-Z}.
For any fixed radius $r<1$, we can choose an integer $M_0$ such that $r<d^{\frac{-1}{n}}$ for all $n>M_0$, ensuring all poles $x_{n;j}$ with $n>M_0$ lie outside the disk $|x|<r$. 
To establish uniform convergence, we examine the remainder for $M>M_0$:
\begin{equation}
\label{eq:bound-on-Zd-function} 
|S_{d;\infty} (x) - S_{d;M}(x)| \le \sum_{n=M+1}^\infty\frac{n \cdot r \cdot |c_{n;j}|}{d^{-1/n} - r}\,. 
\end{equation}
As $n \to \infty$, the term $d^{-1/n} - r$ approaches $1-r > 0$. 
The convergence of the tail is therefore dictated by the exponential decay of $|c_{n;0}|$. 
This ensures that $S_{d;M}(x)$ converges uniformly to a meromorphic function $S_{d;\infty}(x)$. 
\end{proof}

We therefore observe a transition between divergent series and convergent series parametrized by critical dimension. 
This phenomenon should be an $N=\infty$ effect and will not be present for any finite but large $N$ \cite{Harmark:2014mpa,Dolan:2007rq}. 
For any finite $N$, no matter how large, the partition function $\mathcal{Z}_d(N,x)$ only has poles at the boundary of the unit disk instead of layers of poles studied in this paper. 
The zeros of the partition function approach to the Hagedorn temperature by Lee-Yang type singularity \cite{Kristensson:2020nly}.

For $d\ge 13$, Proposition~\ref{prop:UnifConv-dge13} shows that the circles-of-poles expansion converges uniformly on compact subsets of the open unit disk to a holomorphic function $S_{d;\infty}(x)$. A remaining question is whether this limit agrees exactly with the original generating function $\mathcal{Z}_d(x)$. The issue is that the pole expansion  detects only poles strictly inside the unit disk. For example, 
\begin{equation}
 \tilde{\mathcal{Z}}_d(x)=\mathcal{Z}_d(x)+\frac{1}{1-x}
\end{equation}
has the same interior pole structure as $\mathcal{Z}_d(x)$, since $(1-x)^{-1}$ has no poles on any  layer $|x|=d^{-1/n}<1$. Hence $(1-x)^{-1}$ is invisible to the circles-of-poles analysis, even though it changes the holomorphic function on $|x|<1$. Thus, to prove $S_{d;\infty}(x)=\mathcal{Z}_d(x)$, one must rule out the addition of a function holomorphic in the open unit disk whose singular behaviour appears only on the boundary $|x|=1$. Our numerical evidence strongly supports that is no such extra term in  the present case.

We perform explicit numerical evaluations of the coefficients $ \mathcal{Z}_d ( x )$ and  $S_{d;M}(x)$. 
The discrepancies between the exact coefficients of $\mathcal{Z}_d(x)$ and those extracted from the truncated partial fraction sum $S_{d;M}(x)$ are heavily suppressed. 
For example, considering the macroscopic energy $K=70$ at $d=14$, the exact degeneracy $Z_{14}(70)$ is of order $\mathcal{O}(10^{80.26})$. 
In comparison, the absolute errors from the asymptotic approximations \eqref{eq:asymptotic-ZdK-BigO} are negligible even we change the truncation order $M$:
\begin{align}
	\begin{split}
& [x^{70}] \Big|\mathcal{Z}_{d=14}(x) - S_{M=13}(x) \Big| \sim 10^{2.88} \,, \qquad \delta \sim \mathcal{O}(10^{-77.38})\\
& [x^{70}]  \Big|\mathcal{Z}_{d=14}(x) - S_{M=17}(x) \Big| \sim 10^{1.17} \,,\qquad \delta \sim \mathcal{O}(10^{-79.10}) \,.
	\end{split}
\end{align}
where we denote $\delta$ as the ratio between the errors and the scale of $Z_d(K)$.
This striking suppression extends deep into the low-energy regime, where the partial fraction summation becomes almost exact. 
For instance, evaluating at $(K,d,M)=(6,14,17)$ yields a microscopic absolute error:
\begin{equation}
[x^{6}]  \Big|\mathcal{Z}_{d=14}(x) - S_{14,17}(x) \Big| \sim10^{-3.05}	\,,
\qquad  \delta \sim\mathcal{O}(10^{-9.90}) \,.
\end{equation}
Such extreme numerical precision across both low- and high-energy regimes strongly indicates that the convergent sequence $S_{d;\infty}(x)$ reconstructs the original partition function $\mathcal{Z}_d ( x ) $  exactly, leaving no room for additional analytic corrections.
 
 Assuming the equality between $ \mathcal{Z}_d ( x ) $ and the convergent sum $ S_{d;\infty}(x)$ for $ d \ge 13$, which is strongly suggested by the numerical evidence, we note the analogy between 
 \bea 
\mathcal{Z}_d ( x ) = S_{d;\infty}( x )
 \eea
 and the Mittag-Leffler expansion for analytic functions with simple poles  and bounded on the finite complex plane \cite{whittaker2020course}. 
 We may think of the pole expansion of $ \mathcal{Z}_d (x)$ as a generalisation of the Mittag-Leffler expansion which exists for certain classes of meromorphic functions which are not necessarily bounded on the finite complex plane and have a natural boundary at finite radius.

\subsection{Fermionic $d$-matrix theory}
\label{sec:critferm} 

The discovery of a critical dimension $d=12$ in the bosonic sector raises a natural question: is this threshold universal across different matrix models?
To address this, we turn our attention to the fermionic $d$-matrix theory, characterized by the partition function \cite{Aharony:2003sx,Dolan:2007rq,OConnor2023TraceRelations}
\begin{equation}\label{eq:partitionfunction-fermioninc-d}
	\mathcal{Z}_d^F (x) = \prod_{n=1}^\infty \frac{1}{1+d(-x)^n} \,.
\end{equation}
In this model, the even and odd modes of $n$ exhibit distinct analytic structures, necessitating separate treatments for their respective poles:
\begin{align}
	\begin{split}
		x_{2n-1;j} &= d^{-\frac{1}{2n-1}} \omega_{2n-1}^j, \qquad j=0,\cdots,2n-2 \\
		x_{2n;j} &= d^{-\frac{1}{2n}} \omega_{2n}^{j+\frac{1}{2}}, \qquad j=0,\cdots,2n-1 \,.
	\end{split}
\end{align}
Through partial fraction decomposition, the partition function is decomposed by fractions as
\begin{equation}
\mathcal{Z}_d^F (x) \sim \sum_{n=1}^\infty \sum_{j=0}^{2n-2} \frac{c_{2n-1;j}}{1-d^{\frac{1}{2n-1}} \omega_{2n-1}^{-j} x} + \sum_{n=1}^\infty \sum_{j=0}^{2n-1}  \frac{c_{2n;j}}{1-d^{\frac{1}{2n}} \omega_{2n}^{-j-\frac{1}{2}} x}  \,,
\end{equation}
where the $1/2$-shift in the index $j$ for the even-sector coefficients $c_{2n;j}$ arises from the $(-1)^n$ factor in the fermionic partition function \eqref{eq:partitionfunction-fermioninc-d}.
We denote $Z_d^F(K)$ as the Taylor coefficient of this expansion.

While the computation of $c_{2n-1;j}$ and $c_{2n;j}$ is more technically involved than in the bosonic case, it follows a parallel logic. 
For instance, $c_{2n-1;j}$ can be expressed as:
\begin{align}
	\begin{split}
		c_{2n-1;j} = \frac{1}{n} \prod_{k=1}^{n-1} \left[ \frac{1}{1+d x_{2n-1;j}^{2k}}  \frac{1}{1-dx_{2n-1;j}^{2k-1}} \right]  \prod_{k=1}^{\infty} \frac{1}{1+x_{2n-1;j}^{2k-1}} \prod_{k=1}^\infty \frac{1}{1-x_{2n-1;j}^{2k}} \,.
	\end{split}
\end{align}
To extract the large-$n$ asymptotics, we utilize the identity for the alternating product:
\begin{equation}
	\prod_{k=1}^\infty \frac{1}{1+x^{2k-1}} = \frac{1}{(-x;x^2)_{\infty}} = \frac{\mathcal{Z}_1(x) \mathcal{Z}_1(x^4)}{\mathcal{Z}_1(x^2)^2} \,,
\end{equation}
which allows us to apply modular transformations \eqref{eq:modular-dedekind} to the $\mathcal{Z}_1$ factors.
The first square bracket factor can be computed in the large $n$ limit by its exponential summation formula. 

The leading-order behaviour of the coefficients is governed by the functional:
\begin{equation}
	\mathcal{F}(d;j) = 6 \ln^2 d +3 \text{Li}_2 \left(\frac{1}{d^2}\right) + 2\pi^2(12j^2-1) \,.
\end{equation}
The zero of this function at $j=0$ is $d\approx 6.111$.
Specifically, the coefficients in the large-$n$ limit behave as:
\begin{align}
	\begin{split}
c_{2n;j} &\sim \exp\left[-\frac{n \ln d}{24\pi^2} \frac{\mathcal{F}(d;j+\frac{1}{2})}{(j+\frac{1}{2})^2 + \frac{\ln^2 d}{4\pi^2}} \right] \,, \\
c_{2n-1;j} &\sim \exp\left[-\frac{n \ln d}{24\pi^2} \frac{\mathcal{F}(d;j)}{j^2 + \frac{\ln^2 d}{4\pi^2}} \right]  \,.
	\end{split}
\end{align}
Analysis of $\mathcal{F}(d;j)$ reveals that for the principal odd-sector root ($j=0$), the exponent remains negative for small $d$, specifically when $d \le 6$. 
Consequently, $c_{2n-1;0}$ grows exponentially in this regime. Conversely, the even-sector coefficients $c_{2n;j}$ always decay regardless of $d$ or $j$; the $1/2$-shift in the angular index ensures that $\mathcal{F}(d; j+1/2)$ is strictly positive. As both sectors contribute to the total degeneracy $Z_d^F(K)$, the stability of the series is dictated by the odd sector, yielding a fermionic critical dimension of $d_\star^F = 6$.

Consistent with the bosonic model, this threshold can be estimated by neglecting the dilogarithm term:
\begin{equation}
	6\ln^2 d \approx 2\pi^2 \quad \Rightarrow \quad d \approx e^{\pi/\sqrt{3}} \approx 6.1 \,.
\end{equation}
In conclusion, both bosonic and fermionic $d$-matrix models possess distinct critical dimensions $d_\star$. 
For $d \ge d_\star$, the models admit convergent series expansions for both the partition functions and their corresponding degeneracies, signalling a transition in the analytic behaviour of the theory.

\section{Small cycle dominance and asymptotics of multi-traces }
\label{sec:MTconeq}

In this section we review the combinatorial derivation of the refined (and unrefined) counting function for the 2-matrix invariants using an approach based on enumerating the equivalence classes of the permutations of indices which can be used to construct the invariants. This algebraic combinatorial approach extends to the construction of orthogonal bases, the computation of correlators (see  \cite{Corley:2001zk,KR,BHR1,Rob1,Rob2,BHR2,EHS,Padellaro:2024rld} as background and reviews \cite{SR-SWrev,Ramgoolam:2016ciq,deMelloKoch:2024sdf,NegSHCRev} for applications beyond enumeration). A key role is played by the cycle structures of the equivalence generating permutations arising in this description. We show that the all orders asymptotic expansion is organised in terms of cycles of increasing length starting from the smallest. We refer to this phenomenon as small cycle dominance of the asymptotics. The detailed discussion is given in the case of the unrefined partition function, while the extension to the refined case is motivated as an interesting and promising avenue for the future. 

\subsection{Counting multi-traces: contraction and equivalence permutations  }
\label{sec:reviewcount} 

In the $U(N)$ gauge theory, local gauge-invariant operators naturally manifest as multi-traces. 
Any specific multi-trace structure is generated by contracting the fundamental indices in a particular configuration. 
Consider invariants which have degree $m$ in $X$ and degree $n$ in $Y$. The index contractions are pairings between upper and lower indices
\bea 
X^{ i_1}_{ j_1 } X^{ i_2}_{ j_2 } \cdots X^{ i_m  }_{ j_m  }  Y^{ i_{m+1}   }_{ j_{ m+1}   }  \cdots Y^{ i_{ m+n} }_{ j_{ m+n} }  \,.
\eea
Any contraction can be parametrised by a permutation $ \sigma \in  S_{ m+n} $. We denote the corresponding invariant  polynomial 
\bea\label{Osigma}  
\cO_{ \sigma } ( X , Y ) = \sum_{ i_1 , \cdots , i_m , i_{m+1} , \cdots , i_{ m+n} =1}^{ N }  X^{ i_1}_{ i_{ \sigma(1)} } \cdots X^{ i_m}_{ i_{ \sigma(m)} }  Y^{ i_{m+1}   }_{ i_{  \sigma ( m+1 ) }   }  \cdots Y^{ i_{ m+n} }_{ i_{ \sigma (  m+n ) } }  \,.
\eea 
We will refer to $ \sigma $ as an index-contraction permutation, and $S_{ m+n} $ as the group of index-contraction permutations. 
Different permutations  do not always produce distinct invariant functions; rather, they are subject to permutation gauge transformations among letters of the same species, yielding the exact equivalence:
\bea 
\cO_{ \sigma  } ( X , Y ) = \cO_{  \gamma \sigma \gamma^{-1}   } ( X , Y ) ~\hbox{  for any } 
\gamma \in S_m \times S_n  \,.
\eea

Counting invariant functions amounts to enumerating equivalence classes of index-contraction permutations 
$ \sigma \in  S_{ m+n} $. 
Two permutations $ \sigma , \tilde \sigma \in S_{ m+n} $ are considered equivalent if they are related by a permutation gauge transformation:
\bea\label{permequivs} 
\tilde \sigma  = \gamma \sigma  \gamma^{ -1}  \hbox{ for some } \gamma \in S_{ m} \times S_n  \,.
\eea
We will refer to the permutations $ \gamma $ as equivalence-generating permutations.
By Burnside's Lemma \cite{Wiki-Burnside}, the number of such equivalence classes is equal to the average number of fixed points under the conjugation action of $\gamma$. 
A permutation $\sigma$ is a fixed point if it commutes with $\gamma$, namely $\gamma \sigma \gamma^{-1} = \sigma$, or equivalently, $\gamma \sigma \gamma^{-1} \sigma^{-1} = 1$.

To formalize this, we introduce the delta function on the symmetric group $S_{m+n}$, defined as $\delta(\tau) = 1$ if $\tau$ is the identity permutation, and $\delta(\tau) = 0$ otherwise. 
The partition function $Z(m,n)$ is then precisely the fixed-point average:
\begin{eqnarray}\label{Zmnfxdpt}  
&& Z ( m , n ) = { 1  \over m! n! } \sum_{ \gamma \in S_m \times S_n }  \Big(\hbox{ Number of $\sigma \in S_{m+n}$ fixed by $ \gamma $ } \Big) \cr 
&& =  { 1  \over m! n! } \sum_{ \gamma \in S_m \times S_n } \sum_{ \sigma \in S_{ m+n} } 
 \delta ( \gamma \sigma \gamma^{-1} \sigma^{-1} )  \,.
\end{eqnarray}
Any equivalence-generating permutation naturally decomposes into $ \gamma = \gamma_1 \circ \gamma_2$ where  $ \gamma_1 \in S_m , \gamma_2 \in S_n$. The conjugacy classes of $\gamma_1$ and $\gamma_2$ are uniquely characterized by the integer partitions $p \vdash m$ and $q \vdash n$, respectively. 
The number of group elements belonging to these specific conjugacy classes is determined by the order of their respective centralizers (often denoted as the automorphism group of the partition):
\begin{equation}\label{eq:automorphism-pq}
{ m! \over | \Aut ( p ) |  }   = { m!  \over \prod_i i^{ p_i} p_i! } \,, \qquad   { n! \over | \Aut ( q ) | }  =  { n!  \over \prod_i i^{ q_i} q_i! } \,.
\end{equation}
When embedded into the full symmetric group $ S_{ m+n} = S_K $, the composite permutation $\gamma$ possesses the combined cycle structure $ p \circ q $. 
The number of index-contractions $\sigma \in S_{m+n}$ that satisfy $\sigma \gamma \sigma^{-1} = \gamma$ is exactly the order of the centralizer of $\gamma$ in $S_{m+n}$:
\bea \label{eq:automorphism-pq-2}
| \Aut ( p \circ q ) |  = \prod_{ i } i^{ p_i + q_i } ( p_i + q_i ) !  \,.
\eea 
Substituting these combinatorial weights back into \eqref{Zmnfxdpt}, the summation over individual group elements $\gamma$ reduces into a summation over the partitions $p$ and $q$:
\bea\label{Zmnpqsum}  
Z ( m , n  )  = \sum_{ p \vdash m , q \vdash n  } 
  { | \Aut ( p \circ q ) | \over | \Aut ( p ) | | \Aut ( q ) | }   \,.
\eea 
This algebraic reformulation demonstrates that the partitions $p$ and $q$ encode the cycle structures of the equivalence-generating permutations. 
As we will detail in Section \ref{sec:asympsmall}, the leading large-$(m,n)$ asymptotics are robustly dominated by the small-cycle modes within $p \circ q$.

\subsubsection{ From permutations to infinite product } 
\label{lem} 

This is done in earlier papers e.g. \cite{Pasukonis:2013ts} but useful to recall since it is important in seeing the link between the infinite product and the cycle lengths. 

We first establish that the refined partition function, restricted to permutations with cycle lengths no less than $l$, can be expanded in terms of partition automorphisms as follows:
\bea \label{cZsum} 
\sum_{ m , n } x^m  y^n \sum_{ \substack { p \vdash m , q \vdash n \\ ( p \circ q ) \in \cP_l  ( m + n ) }}  { \Aut ( p \circ q ) \over \Aut ( p ) \Aut ( q ) }  = \prod_{ i = l }^{ \infty } { 1 \over  1 - x^i - y^i }  \,,
\eea
where $\mathcal{P}_l ( K )$ is defined as the set of partitions of $K$ consisting exclusively of parts of size no less than $l$, following \cite{BenGeloun:2021cuj}. 
More concretely, given the subset of $p = [ 1^{ p_1} , 2^{ p_2} , \cdots ] $ where  the smallest $i$ with non-zero $p_i$ is $ i = l $:
\begin{equation}
	\mathcal{P}_l(K) = \{p\vdash K | \, p= [l^{p_l} (l+1)^{p_{l+1}} ... K^{p_K}], \quad p_l \neq 0 \} \,.
\end{equation}
To prove the identity \eqref{cZsum}, we recall the formula \eqref{eq:automorphism-pq} and \eqref{eq:automorphism-pq-2} for the order of the automorphism group of a partition $p = [1^{p_1},\cdots ,K^{p_K}]$, $q$ and also $p\circ q$:
\begin{equation}
\Aut ( p ) = \prod_i i^{ p_i } p_i  ! ~;~ \Aut ( q ) = \prod_i i^{ q_i} q_i!  
~;~ \Aut ( p \circ  q ) = \prod_i i^{ p_i + q_i } ( p_i + q_i ) !  \,.
\end{equation}
Applying this definition to the restricted partitions $p$, $q$, and their union $p \circ q$, the ratio of automorphisms on the left-hand side of \eqref{cZsum} simplifies elegantly to a product of binomial coefficients:
\bea 
&& \cZ ( x , y )  = \sum_{ m , n =0}^{ \infty } x^m y^n 
\sum_{ p \vdash m , q \vdash n } \prod_{i}    { ( p_i + q_i ) ! \over p_i! q_i ! }   =  \sum_{ p_i } \sum_{ q_i} 
x^{ \sum_i i p_i } y^{ \sum_i i q_i  } \prod_{i=1}^{ \infty }    { ( p_i + q_i ) ! \over p_i! q_i ! }   \cr 
&& =  \sum_{ p_i } \sum_{ q_i}  
\prod_{i=1}^{ \infty } 
 x^{ i p_i } y^{ i q_i } { (p_i + q_i ) ! \over p_i ! q_i! }  = \prod_{ i =1}^{ \infty } \sum_{ p_i , q_i  }  x^{ i p_i } y^{ i q_i } { (p_i + q_i ) ! \over p_i ! q_i! } \cr 
&& = \prod_{ i=1 }^{ \infty }  \sum_{ n_i =0}^{ \infty } \sum_{ p_i =0}^{ n_i } 
x^{ i p_i } y^{ i ( n_i - p_i ) }   {  n_i ! \over p_i ! ( n_i - p_i )! }  = \prod_{ i =1  }^{ \infty }  \sum_{ n_i =0}^{ \infty }  ( x^i + y^i )^{ n_i } \cr 
&& = \prod_{ i =1  }^{ \infty }  { 1 \over  1 - x^i - y^i  }  \,.
\eea  
Similar algorithm can be applied for $ \cZ_l ( x , y ) $, the generating function for the ratios of automorphisms when the $ p \circ q $ is constrained to have a minimum cycle length of $l$.
This yields 
\begin{equation}\label{eq:Zlxy-function}
\cZ_l ( x , y ) = \prod_{ i = l }^{ \infty }  { 1 \over  1 - x^i - y^i  } \,. 
\end{equation}

\subsection{Asymptotics and  small cycles of equivalence permutations } 
\label{sec:asympsmall} 

In this section, we will show by casting the counting formula as a summation over partitions $p$, the large-$K$ asymptotic behaviour is dominated by configurations of the shortest possible cycles. 
Specifically, the leading contribution arises from partitions consisting almost entirely of length-$1$ cycles, taking the explicit form $p=[1^{K-m},m]$. This structural mechanism strongly echoes findings in rank-3 tensor invariant models, where identical partition configurations dictate the dominant contributions to 3-index tensor observables \cite{BenGeloun:2021cuj}. More broadly, this underlying principle of small-cycle dominance fundamentally controls the high-temperature expansion across various permutation-invariant matrix and tensor models \cite{OConnor:2024udv,PITQM}.

\subsubsection{Unrefined partition function}
\label{ssec:unrefined-smallcycle}
Following \cite{BenGeloun:2021cuj}, we define $ \mathcal{P}_1 ( K ) $ as the specific subset of partitions of $K$ whose minimal part is exactly $1$. In the standard cycle notation $ p = [ 1^{ p_1}, 2^{ p_2} , \cdots ] $, this corresponds to the strict requirement $p_1 \ne 0$. 
Denoting the set of  all the partitions of $ K $ as $ \mathcal{P}(K )$. 
The full partition space admits the exact disjoint decomposition: 
\bea\label{PartitionsDisjoint} 
\cP( K ) = \cP_{ K } (K ) \sqcup \left ( \bigsqcup_{ l \in \{ 1 , 2 , \cdots , \lfloor { { K \over 2 } \rfloor } \} } 
\mathcal{P}_l  ( K ) \right )  \,.
\eea

The unrefined partition function $\mathcal{Z}_2(x)$ can be geometrically expanded as 
\bea \label{SumPwrsx}
\cZ ( x )           && = \sum_{ K=0 }^{ \infty }  x^K \sum_{ p \vdash K } 2^{ C ( p ) } \,,
\eea 
where $C(p) $ is defined as $ \sum_{ i } p_i $. When $p$ is taken to describe a cycle structure of permutations with $p_i $ cycles of length $i$, $ C (p) $ is the total number of cycles in the permutations with cycle structure $p$. 

To estimate the relative weights of these disjoint sectors in the large-$K$ limit, it is instructive to examine the maximal configurations within each subset contributing to the sum \eqref{SumPwrsx}. 
The absolute maximum is achieved by the completely disconnected configuration $p = [1^K ]\in \cP_1(K)$, which yields a macroscopic combinatorial weight of order
$2^{ K} $ to the sum. 
While for $\mathcal{P}_2(K)$ (assuming even $K$ for simplicity), the leading contribution comes from the uniform partition $p = [2^{K/2}]$, yielding a weight of only $2^{ K/2 }$
Generalizing this trend, the maximal contribution from any higher-order sector $\mathcal{P}_l(K)$ scales parametrically as $2^{K/l}$. 
Because the sequence of maximal weights $2^{K/l}$ is exponentially suppressed as $l$ increases, a strict combinatorial hierarchy emerges: $2^K \gg 2^{K/2} \gg 2^{K/3} \dots$ at large $K$. 
This structure not only schematically matches the asymptotic series of $Z_2(K)$ in \eqref{eq:degeneracyZ2}, but also demonstrates that the full partition function $\cZ(K)$ is governed by the $\mathcal{P}_1(K)$ sector, cementing the physical intuition of small-cycle dominance.

Now we will exploit the systematic  decomposition \eqref{PartitionsDisjoint}. 
\begin{eqnarray} \label{Zsuml}  
\cZ ( x )  && = \sum_{ K } x^{ K } \sum_{ l =1 }^{ K } \sum_{ p \in \cP_l ( k  ) } 2^{ C ( p ) } 
\cr   && = \sum_{ K } x^K Z ( K ) = \sum_{ K } x^K \sum_{ l =1}^{ K } Z_l ( K  )  \,,
\end{eqnarray}
where the definitions of $Z ( K ) $ and $Z_l(K) $ are 
\begin{equation} 
Z ( K )= \sum_{ p \vdash K } 2^{ C ( p ) } \,, \qquad 
Z_l(K) =  \sum_{ p \in \cP_l ( k  ) } 2^{ C ( p ) }  \,.
\end{equation}
Our aim is to recover the first two exponential orders in the large-$K$ asymptotics of $Z(K)$ purely from this combinatorial decomposition.

We will first study the $Z_1(K)$ which receives the contributions from $\mathcal{P}_1(K)$.
Any partition $p \in \mathcal{P}_1(K)$ can be uniquely decomposed into its length-$1$ cycles and a residual partition $r$ containing no length-$1$ cycles:
\begin{eqnarray} 
&& p = [ 1^{ K - \bar m_1 } , r ] \qquad \bar m_1 = \sum_{ i =2}^{ K } i p_i  \in \{ 0 , 1, \cdots , K -1 \} \equiv \cS_1 ( K ) \\ \nonumber 
&& r \vdash \bar m_1 ,\quad  r = [ 2^{ r_2 } , \cdots , K^{ r_K} ] \in \mathcal{P}_2( \bar{m}_1 )\quad \hbox{ with } 
 r_1=0\,,  p_2 = r_2 , \cdots , p_K = r_K   
  \,.
\end{eqnarray}
where $\cP_{ 2^{+} } ( n )$ denotes the partition configuration of $p\vdash n$ excluding the $\mathcal{P}_1$ sets
\begin{equation} 
	\cP_{ 2^{+} } ( n )  = \bigsqcup_{ l \ge 2 } \cP_{ l  } ( n )   \,.
\end{equation}
We have separated out the parts of length $1$ in $p$ and $r $ is the remaining partition of 
$ \bar m_1 $,  which tracks the total number of elements with longer cycles.  

Summing over these decomposed configurations, the exact contribution from the $\mathcal{P}_1(K)$ sector is:
\begin{equation} \label{cZ1}  
Z_1 ( K ) =  \sum_{ \bar m_1 = 0 }^{ K -1 } 2^{ K - \bar m_1 }  \sum_{ r \in \cP_{ 2^{+} } ( \bar m_1 ) } 2^{ C ( r ) }  \,.
\end{equation}
The sum over $r$ precisely generates the sequence of coefficients for a restricted partition function, which can be elegantly extracted using the standard coefficient operator:
\begin{equation}
\sum_{ r \in \cP_{ 2^+} ( \bar m_1 )  } 2^{ C ( r ) }   = \Coeff \left(  x^{ \bar m_1 } , \prod_{n=2}^{ \infty } { 1\over  1 - 2x^n   } \right) 
= \Coeff \left(  x^{ \bar m_1 } , \frac{1}{(2x^2;x)_{\infty}} \right)  \,.
\end{equation}
To extract the leading large-$K$ asymptotics, we recognize that extending the summation limit of $\bar{m}_1$ to infinity introduces only exponentially suppressed errors.  Defining this asymptotic limit as $Z_{1,0}(K)$, the summation transforms directly into the evaluation of the generating function at $x = 1/2$:
\bea\label{ExttoInf} 
Z ( K  ) \sim Z_1 ( K ) \sim Z_{1,0}  ( K )  &&  \equiv  2^{ K } \sum_{ \bar m_1 =0 }^{ \infty } 2^{ - \bar m_1 } \Coeff \left(  x^{ \bar m_1 } , \frac{1}{(2x^2;x)_{\infty}}  \right)   \cr 
&& = 2^K \times \frac{1}{(2x^2;x)_{\infty}} \Bigg\vert_{ x = 1/2} \sim 3.46275 \times 2^K  \,.
\eea
This algebraic derivation exactly reproduces the leading asymptotic coefficient independently obtained via the circles-of-poles method (cf. equations \eqref{eq:pole-expansion} and \eqref{eq:numer-Z2K-suble}).

To extract the next term in the asymptotic expansion, we observe that the leading order approximation 
$Z_1 ( K ) \sim Z_{1,0}  ( K ) $ relied on extending the sum.
While this is strictly valid at the leading exponential order $ 2^K$, it introduces a truncation error. 
We define this remainder term as $ \cR_1 ( K ) $.  
\bea 
Z_1 ( K ) &=& Z_{ 1, 0 } ( K ) - \cR_1 ( K ) \\
\cR_{ 1} ( K ) &=& \sum_{ \bar m_1 = K }^{ \infty  } 2^{ K - \bar m_1 }  \sum_{ r \in \cP_{ 2^{+} } ( \bar m_1 ) } 2^{ C ( r ) }  \,.
\eea 
As we will explicitly demonstrate, this remainder is of exponential order  $ 2^{ K \over 2 } $.
Furthermore, this subleading scaling perfectly matches the leading contribution from the next combinatorial sector
$Z_2(K) \sim Z_{2,0}(K)$, obtained by extending the defining sum to $\infty$, 
Therefore, the first subleading exponential scale intrinsically receives mixed contributions from both the $l=1$ truncation and the $l=2$ sector.

To evaluate $\mathcal{R}_1(K)$, it is convenient to shift the summation variable by defining $m_1' = \bar{m}_1 - K \ge 0$.  
Any partition $r \in \mathcal{P}_{2^+}(K+m_1')$ can be uniquely factored by isolating its length-$2$ cycles. Let $n_1$ denote the number of length-$2$ cycles subtracted from the maximally allowed number $\lfloor (K+m_1')/2 \rfloor$:
\bea  \label{eq:def-subleadingZ2-config}
&&   r \vdash ( K + m_1') \,,  \qquad r = [ 2^{\lfloor  { K+ m_1' \over 2 }    \rfloor   - n_1 } , s ] \cr 
&&   s \in \cP_{ 3^+  } ( 2 n_1  + \bar E ( K + m_1' ) )\,,  \qquad \bar E  ( K + m_1')  = { 1 \over 2 } ( 1 - ( -1)^{ K +  m_1' } ) \cr 
&& C ( r ) = \lfloor  { K+ m_1' \over 2 }    \rfloor  - n_1 + C ( s ) \,,
\eea 
where the integer $0 \le n_1 \le  \Big \lfloor  { K+ m_1' \over 2 }  \Big \rfloor $, and $\bar{E}(n)$ is the parity indicator function, yielding $1$ if $x$ is odd and $0$ if even.
Substituting this combinatorial decomposition back into the remainder gives:
\bea 
&& \cR_1 ( K ) = \sum_{ m_1' =0}^{ \infty }  2^{ - m_1'} \sum_{ n_1 =0}^{ \lfloor { K+ m_1' \over 2 }  \rfloor   }  \sum_{ s \in \cP_{ 3^+  } ( 2 n_1 + \bar E ( K + m_1')  )  } 2^{\lfloor  { K+ m_1' \over 2 }    \rfloor  - n_1 + C ( s )  } \cr 
&& =  \sum_{ m_1' =0}^{ \infty }  2^{\lfloor  { K+ m_1' \over 2 }    \rfloor  - m_1'} \sum_{ n_1 =0}^{ \lfloor { K+ m_1' \over 2 }  \rfloor   }  \sum_{ s \in \cP_{ 3^+  } ( 2 n_1 + \bar E ( K + m_1')  )} 2^{  - n_1 + C ( s )  } \,.
\eea
At large $K$, extending the upper limit of the $n_1$ summation to infinity only discards configurations that are exponentially suppressed compared to the $2^{K/2}$ scale. 
Defining this principal asymptotic component as $\mathcal{R}_{1,0}(K)$, we can get
\bea \label{eq:def-cR10}
 \cR_{ 1 } ( K ) = \cR_{ 1,0} ( K ) + {o}(2^{K/2})   \,.
 \eea 
where 
  \bea 
&&  \cR_{ 1,0} ( K ) = \sum_{ m_1' =0}^{ \infty }  2^{\lfloor  { K+ m_1' \over 2 }    \rfloor  - m_1'} \sum_{ n_1 =0}^{ \infty  }  \sum_{ s \in \cP_{ 3^+  } ( 2 n_1 + \bar E ( K + m_1')  )  } 2^{  - n_1 + C ( s )  }
\cr 
&& =  \sum_{ m_1' =0}^{ \infty }   2^{\lfloor  { K+ m_1' \over 2 }    \rfloor  - m_1'}  
\sum_{ n_1 = 0}^{ \infty } 2^{ -n_1 } \sum_{ s \in \cP_{ 3^+  } ( 2 n_1 + \bar E ( K + m_1')  )    } 2^{ C (s) } \cr  
 && = \sum_{ m_1' =0 }^{ \infty } 
 2^{\lfloor  { K+ m_1' \over 2 }    \rfloor  - m_1'}   \sum_{ n_1 = 0}^{ \infty } 2^{ -n_1 }  
 \Coeff \left (   x^{ 2n_1 + \bar E ( K + m_1')  } , \frac{1}{(2x^3;x)_{\infty}}\right )   \,.
 \eea 
Carrying out the computation detailed in Appendix~\ref{append:higher-order-combinartorics}, we can further rearrange the $\cR_{1,0}$ terms into those with even and odd $K$, respectively:
\begin{eqnarray} \label{simpR10-even} 
&& \cR_{ 1,0} ( 2L )  = 2^{L} \left( 1 + { 1 \over \sqrt{2} } \right)  { 1 \over  ( { 1 \over \sqrt  2 }; { 1 \over \sqrt 2 } )_{\infty} } + 2^L \left( 1 - { 1 \over \sqrt{2} } \right)  { 1 \over ( - { 1 \over \sqrt  2 } ,- { 
  1 \over \sqrt 2 } )_\infty } \,,\\
\label{simpR10-odd}  
&&  \cR_{ 1,0} ( 2L +1 )  = 2^{ L } ( 1 + \sqrt { 2 } ) { 1 \over   ({ 1 \over {\sqrt{2}}} ; {1 \over \sqrt{2} }   )_\infty }  + 2^L ( 1 - \sqrt { 2 } ) {  1 \over  ( -{  1 \over {\sqrt{2}}} ; -{ 1 \over \sqrt{2} } )_\infty }   \,.
\end{eqnarray}

We now generalize this procedure to analyse the contributions from higher $l$ sectors in \eqref{Zsuml}. 
We can write the partitions in 
$ \cP_l( K ) $ as $  [ l^{ p_l } , r ]  $ where $ r$ has all parts of length $(l+1) $ or greater. 
Let $Q(K,l) = \lfloor K/l \rfloor$ denote the maximum possible number of length-$l$ cycles, and $R(K,l) = K \pmod l$ the remainder, such that $K = l Q(K,l) + R(K,l)$. The partition $p$ explicitly takes the form:
\bea 
&& p = [ l^{ p_l } , r ]  ;  \qquad p_l \ge 1; \qquad  r \in \cP_{ l+1} ( \bar m_l ) \\
&& \bar m_l = K - l p_l, \qquad p_l \in \{ 1, 2, \cdots , Q ( K , l ) \} \cr 
&& \bar m_l \in \{ R ( K , l ) , R ( K , l )  + l , \cdots ,  R ( K , l )  + ( Q ( k , l ) -1) l  =( K - l  )   \}  \equiv \cS_l  ( K  ) \nonumber
\eea
Summing over these configurations, the contribution from the $\mathcal{P}_l(K)$ sector evaluates to:
\bea 
Z_l  ( K  ) =  \sum_{ \bar m_l  \in \cS ( K , l ) } 2^{ ( K - \bar m_l )\over l } \sum_{ r \in  \cP_{ (l+1)^{+} }  ( \bar m_l   ) }    2^{ C ( r ) } 
\eea
where, following our previous notation, the set $\mathcal{P}{ (l+1)^{+} } ( n ) \equiv \bigsqcup_{j \ge l+1} \mathcal{P}_j(n)$ denotes the partitions of $n$ completely devoid of any cycles of length $l$ or smaller.
The leading contributions come from small $ \bar m_l$. E.g. when $ K $ is divisible by $l$ the smallest $ \bar m_l = 0 $ and we have $2^{ K / l } $ which agrees with the form of the answer from the partial fractions method.
We also observe that 
\bea 
\sum_{ r \in  \cP_{ l+1}  ( \bar m_l  ) }    2^{ C ( r ) }  
= \Coeff \left( x^{ \bar m_l  } , \frac{1}{(2x^{l+1};x)_{\infty}} \right)  \,.
\eea  
In this framework, the $l=1$ sector dictates the leading asymptotic growth, the $l=2$ sector provides the first exponentially suppressed non-perturbative correction, $l=3$ the second, and so forth. 
Within any given sector $l$, the dominant contributions strictly arise from the smallest possible values of $\bar{m}_l$. For instance, if $K$ is completely divisible by $l$, the minimum is $\bar{m}_l = 0$, yielding a leading weight of $2^{K/l}$. 
To formalize the large-$K$ limit for a fixed small $l$, let us assume for simplicity that $K$ is a multiple of $l$. 
The set of remainder $\bar{m}_l$ simplifies to $\mathcal{S}_l ( K ) = \{ 0 , l , 2l , \dots \}$. Writing $\bar{m}_l = l \hat{m}_l$ with integer $\hat{m}_l \ge 0$, and extending the summation to infinity (discarding deeply suppressed errors), we define the leading asymptotic component $Z_{l,0}(K)$:
\bea\label{lapprox}
Z_l ( K ) \sim  Z_{ l, 0 } ( K )   \equiv 2^{ K /l } \sum_{ \hat m_l = 0   }^{ \infty } 2^{ - \hat m_l }  
\Coeff \left( x^{  l \hat  m_l    } ,  \frac{1}{(2x^{l+1};x)_{\infty}}  \right)  \,.
\eea
Specialisation to the case $ l=2$ results in the $Z_{2,0}(K)$ for even $ K = 2L$, 
\bea\label{simpZ10}  
Z_{ 2 , 0 } ( 2L ) &&  = 2^{ L }  \cdot { 1 \over 2 } \left ( \frac{1}{(2x^3;x)_{\infty}} \Big|_{x=\frac{1}{\sqrt{2}}} 
+\frac{1}{(2x^3;x)_{\infty}} \Big|_{x=-\frac{1}{\sqrt{2}}}  \right )  \cr 
&& = 2^{ L} \cdot { 1 \over 2 }  \left ({ 1 \over  ( { 1 \over \sqrt  2 } ; { 1 \over \sqrt 2 } )_\infty }  +  { 1 \over  ( -  {1 \over  \sqrt 2 } ; -  { 1 \over \sqrt 2 }   )_\infty }    \right )  \,.
\eea 
and odd $K = 2L+1$ respectively as
\bea 
Z_{ 2 , 0 } ( 2L +1 ) = { 2^L \over \sqrt { 2 } }  \left ({ 1 \over  ( { 1 \over \sqrt  2 } , { 1 \over \sqrt 2 } )_{\infty}}  +  { 1 \over ( -  {1 \over  \sqrt 2 }; -  { 1 \over \sqrt 2 } )_\infty }    \right )  \,.
\eea

Collecting terms we have, for even $K$,  using \eqref{simpR10-even} and \eqref{simpZ10}
\bea 
&& Z_{ 2 , 0 } ( 2L  )  - \cR_{ 1, 0 } ( 2L  ) \\
&=& - {  2^{ K \over 2 } \over 2 }   \left ({ 1 \over  ( { 1 \over \sqrt  2 } ; { 1 \over \sqrt 2 })_\infty }  +  { 1 \over  ( - { 1 \over  \sqrt 2 } ; -  { 1 \over \sqrt 2 })_\infty }    \right )  - {  2^{ K \over 2 } \sqrt {2} \over 2   }    \left ({ 1 \over  ( { 1 \over \sqrt  2 } , { 1 \over  \sqrt 2 } )_{\infty} }  -   { 1 \over  ( - { 1 \over  \sqrt 2 } , -  { 1 \over \sqrt 2 } )_\infty }    \right ) \cr 
&& = - {  2^{ K /2} \over 2 } ( 1 + { \sqrt {2}  } ) { 1 \over ( { 1 \over \sqrt  2 } ;{ 1 \over \sqrt 2 } )_\infty } - { 2^{ K /2} \over 2 } (  1 - \sqrt{2}  ) 
 { 1 \over  ( - { 1 \over  \sqrt 2 } ; -  { 1 \over \sqrt 2 } )_\infty } \,,
\eea 
and for odd $K$ 
\bea 
Z_{ 2 , 0 } ( 2L +1  )  - \cR_{ 1, 0 } ( 2L +1  )  
= { - 2^L \over \sqrt{2} } \left (   { ( 1 + \sqrt { 2 } )  \over  ( { 1 \over \sqrt  2 } ; { 1 \over \sqrt 2 } )_\infty }    +    { (  \sqrt { 2 } -1 )  \over  ( - { 1 \over \sqrt  2 } ; - { 1 \over \sqrt 2 } )_\infty }     \right )  \,.
\eea 
These expressions agree with the leading and subleading formula obtained from the complex-analytic method, i.e. the $n=2$ term in \eqref{eq:degeneracyZ2}, with the explicit form elaborated in \eqref{fstfew} and numerical values in \eqref{eq:numer-Z2K-suble}. 
This provides a combinatorial explanation to the degeneracy counting function derived by complex analysis method in Section \ref{sec:complexnalysis}. 
Complex analysis independently shows that there is  an optimal truncation of the asymptotic series at the order $\sqrt{K}$, as  in \eqref{eq:saddlepoint-Kn}. It is an interesting open problem to derive this optimal truncation from the combinatorial framework. While the disjoint decomposition \eqref{PartitionsDisjoint} is exact at  finite $K$, the extensions of   upper summation limits (e.g. of  $\bar{m}_1$ from \eqref{cZ1}  to \eqref{ExttoInf})   to infinity in the derivation of the first two terms of the asymptotic expansion above will require revision for sufficiently high orders compared to $K$. Elucidating the detailed mechanisms at play in this interaction between combinatorics and complex analysis should be an illuminating project.

\subsubsection{Refined partition function}
\label{sec:small-cycles-refined}

The small-cycle organisation of the unrefined partition function admits a natural extension to the 
 refined partition function \eqref{eq:refined-coefficnet-partition}:
\begin{align}
	\begin{split}
Z(m,n)&=\sum_{ p \vdash m , q \vdash n } \prod_{i}    { ( p_i + q_i ) ! \over p_i! q_i ! } = \sum_{l=1}^{\infty}Z_l (m,n)\\
Z_l (m,n)&= \sum_{ ( p \circ q ) \in \mathcal{P}_l(m+n)}  { ( p_i + q_i ) ! \over p_i! q_i ! }  \,.
	\end{split}
\end{align}
The sum over the partitions $ p \vdash m , q \vdash n $ is organised by the positive integer $l$, by using the constraint that the partition $ ( p \circ q ) \vdash (m+n)$ has a minimum part (cycle length of $ \gamma$ )  of size $l$. 
Using this combinatorial method, we can systematically derive the subleading perturbative corrections to the leading-order asymptotics established in the limit $m,n\to \infty$ and $m/n$ fixed \cite{Ramgoolam:2018epz}. 
The mathematical connection between the analytic geometry of the asymptotic expansion and the combinatorics of small cycle dominance allows us to reliably extend the all-orders asymptotics to the refined counting.
The terms are organized such that contributions from $\mathcal{P}_l$ are exponentially suppressed compared to $Z_1(m,n)$. 
A key operational advantage of this combinatorial approach is that extracting subleading corrections is considerably more straightforward than executing rigorous multi-variable complex analysis. 
Conversely, this combinatorial mapping provides a concrete guide for decoding the exact multi-variate analytic structure in future studies.

To isolate the leading perturbative corrections, the relevant contributions arise exclusively from $Z_1(m,n)$.
The corresponding $\mathcal{P}_1$ configurations are explicitly parametrized as:
\begin{equation}
\mathcal{P}_1(m) =[1^{m-\bar{m}_1},r], \quad r\vdash \bar{m}_1; \qquad  \mathcal{P}_1(n) =[1^{n-\bar{n}_1},s], \quad s\vdash \bar{n}_1 \,.
\end{equation}
The associated degeneracy counting function is then given by:
\begin{align}\label{eq:first-order-refined}
	\begin{split}
Z_1(m,n) = \sum_{\bar{m}_1=0}^{m-1} \sum_{\bar{n}_1=0}^{n-1} \frac{(m+n-\bar{m}_1 - \bar{n}_1)!}{(m-\bar{m}_1)!(n-\bar{n}_1)!}
		\text{Coeff} \left[
		x^{\bar{m}_1} y^{\bar{n}_1},\, \prod_{i=2}^\infty \frac{1}{1-x^i -y^i}
		\right] \,.
	\end{split}
\end{align}
The leading order of asymptotics computed in \cite{Ramgoolam:2018epz} can be rederived by applying the Stirling formula, and keep $m/n$ fixed in the $m,n\to \infty$ limit, yielding
\begin{equation}
	\frac{(m+n-\bar{m}_1 - \bar{n}_1)!}{(m-\bar{m}_1)!(n-\bar{n}_1)!} \sim  \frac{1}{\sqrt{2\pi }} \frac{(m+n)^{m+n}}{m^m n^n} \sqrt{\frac{m+n}{mn}} 	\left( \frac{m}{m+n}\right)^{\bar{m}_1} \left( \frac{n}{m+n}\right)^{\bar{n}_1}  \,.
\end{equation}
Denote $\lambda = \frac{m}{m+n}$,
the coefficients in \eqref{eq:first-order-refined} can then be resumed to obtain:
\begin{equation}\label{eq:refined-leading-mn}
Z_1(m,n)	\sim \frac{1}{\sqrt{2\pi }} \frac{(m+n)^{m+n}}{m^m n^n} \sqrt{\frac{m+n}{mn}} \mathcal{Z}_2(\lambda,1-\lambda) \,,
\end{equation}
where $\mathcal{Z}_2(x,y)$ is defined in  \eqref{eq:Zlxy-function} and 
\eqref{eq:refined-leading-mn} is completely consistent with \cite{Ramgoolam:2018epz}. It is useful to rewrite this in terms of $ K = (m+n) $ and $ \lambda = \frac{m}{K}$. 
\begin{equation}
Z_1( K \lambda, K(1- \lambda) ) \sim \frac{1}{\sqrt{2\pi K  }} \left(\frac{1}{\lambda}\right)^{K \lambda  + \frac{1}{2} }  \left(\frac{1}{ 1-\lambda}\right )^{(1- \lambda) K + \frac{1}{2} }  \mathcal{Z}_2(\lambda,1-\lambda) \,,
\end{equation}

While the leading-order behaviour \eqref{eq:refined-leading-mn} was originally derived by evaluating the singular curve $1-x-y=0$ \cite{Ramgoolam:2018epz}, our combinatorial formulation provides a direct mechanism to extract the subleading perturbative corrections. 
These corrections originate from the higher-order terms in Stirling's approximation, namely:
\begin{align}
\begin{split}
&	m! \sim \frac{m^{m+\frac{1}{2}}}{e^m} \sqrt{2\pi} \left( 1+\frac{1}{12m}\right) \,, \\
& (m-\bar{m}_1)^{m-\bar{m}_1}\sim m^{m-\bar{m}_1} e^{-\bar{m}_1} \exp\left(\frac{\bar{m}_1^2}{2m} \right) \,.
	\end{split}
\end{align}
Substituting these expansions into the summand of \eqref{eq:first-order-refined} yields the subleading corrections for the prefactor:
\begin{eqnarray} \label{eq:subleading-corrections-refined}
&& \frac{(m+n-\bar{m}_1 -\bar{n}_1)!}{(m-\bar{m}_1 )! (n-\bar{n}_1 )! }  \sim \frac{(m+n)^{m+n+\frac{1}{2}}}{ \sqrt{2\pi}m^{m+\frac{1}{2}} n^{n+\frac{1}{2}}}
\left(\frac{m}{m+n}\right)^{\bar{m}_1}  \left(\frac{n}{m+n}\right)^{\bar{n}_1} \\
&& \times \left[
1- \frac{m+n}{2mn} \left(\frac{\lambda^2-\lambda+1}{6} - \bar{n}_1 \lambda^2 -(1-\lambda)^2 \bar{m}_1 + (\lambda \bar{n}_1 - (1-\lambda) \bar{m}_1)^2
\right)
\right] \nonumber \,.
\end{eqnarray} 
We parameterize these subleading corrections using the symmetric ratio $\frac{m+n}{mn}$ in the homogeneous thermodynamic limit $m,n\to\infty$ and $m/n$ fixed.
Performing the summation over $\bar{m}_1$ and $\bar{n}_1$ effectively translates the polynomial insertions into differential operators acting on the generating function, resulting in the asymptotic formula for $Z_1(m,n)$, with a first sub-leading correction:
\begin{eqnarray}
Z_1(m,n) && \sim \frac{1}{\sqrt{2\pi }} \frac{(m+n)^{m+n}}{m^m n^n} \sqrt{\frac{m+n}{mn}}  \left[1-\frac{m+n}{2mn} \hat{O}\right] \mathcal{Z}_2(x,y) \Big|_{(x,y)=(\lambda,1-\lambda)} 
	\end{eqnarray}
where the operator $\hat{O}$ is explicitly
\begin{eqnarray}
\hat{O}& =& \frac{x^2-xy +y^2}{6} -xy (x \partial_y+y \partial_x) + x^2(y\partial_y)^2 -2(x\partial_x)(y\partial_y) +y^2 (x\partial_x)^2  \,.
	\end{eqnarray}
It is again useful to rewrite as 
\begin{eqnarray}
	&& 	Z_1(K\lambda,K(1-\lambda)) \\ &\sim& \frac{1}{\sqrt{2\pi K  }} \left(\frac{1}{\lambda}\right)^{K \lambda  + \frac{1}{2} }  \left(\frac{1}{ 1-\lambda}\right )^{(1- \lambda) K + \frac{1}{2} } 
		 \left[1-\frac{\hat{O}}{2K \lambda ( 1 - \lambda )} \right] \mathcal{Z}_2(x,y) \Big|_{(x,y)=(\lambda,1-\lambda)}  \nonumber \,.
\end{eqnarray}
	
The derivation of the first sub-leading correction for $Z_1(m,n)$ above  demonstrates that the combinatorial mechanism of small-cycle dominance remains a robust organizing principle in the refined sector. By parameterizing the $\mathcal{P}_1$ configurations of the equivalence-generating permutations $\gamma$, we have shown how the discrete counting of invariants reduces to specific trace structures in $\sigma$ through the action of differential operators on the generating function. However, a complete analytic description of the all-orders asymptotics for these multi-variable functions remains an open geometric challenge.
	
While the univariate partition function is characterized by isolated simple poles accumulating toward a natural boundary, its multivariate refinement introduces a richer analytic geometry. 
The singular locus is no longer composed of discrete points, but rather complex hypersurfaces defined by equations of the form $1 - \sum x_j^i = 0$. 
These singular hypersurfaces can intersect nontrivially strictly inside the open unit sphere. 
For instance, the simultaneous vanishing of distinct denominator factors, such as
\begin{equation}
	1- x^3-y^3=0, \qquad 1-x^4-y^4=0 \,,
\end{equation}
yields intersection loci at approximately $(x,y) = (-0.2020 \pm 0.9158 i,  -0.2020 \mp 0.9158i)$.
At these intersecting points, the generating function develops higher-order singularities (poles of order two or greater).
The emergence of such intersecting singular strata significantly complicates the complex-analytic extraction of asymptotic coefficients.
Future work can aim for a global geometric generalization of small-cycle dominance, and its link to the contributions of these degenerate poles, likely utilizing the framework of Analytic Combinatorics in Several Variables (ACSV) as detailed in  \cite{PWM2024}. 

An alternative limit, investigated in \cite{Dolan:2007rq}, involves keeping $m$ fixed while exploring the asymptotics of the degeneracy in the large-$n$ limit. 
In this regime, the dynamics are dominated by the partitions of $n$, which grow according to the Hardy-Ramanujan formula. 
Specifically, for a fixed partition $p=[1^{p_1},2^{p_2}\dots] \vdash m$, the corresponding coefficient grows as:
\begin{equation}\label{eq:Dolan-counting-2-matrix}
Z_1(p ,n) \sim \frac{1}{4\sqrt{3}n} \left(\frac{\sqrt{6} n}{\pi} \right)^{C(p)} \frac{1}{\text{Sym}( p )} e^{\sqrt{\frac{2n}{3}} \pi} \,.
\end{equation}
An interesting open question is how to reproduce \eqref{eq:Dolan-counting-2-matrix} directly from the refined formula \eqref{eq:first-order-refined} by summing over the partitions of $n$ for a fixed partition $p$.

\section{Circles of poles  and asymptotics of weighted partition numbers   }
\label{sec:CircSings} 

We obtained in  Section \ref{sec:complexnalysis} the  all-orders asymptotics for the unrefined multi-matrix partition function $ \cZ_d ( x ) $, by using complex analysis on the $x$-plane. 
Each order parametrised by $n\in \{ 1, 2, \cdots \} $   corresponds to simple poles at radii $ |x_n| = d^{ -1/n}  = e^{ -{ \ln d  \over n}}$, with increasing  orders at higher $n$   suppressed by exponential functions of $ \ln d $. As $ n $ tends to infinity,  the radii $ |x_n| $ converge to $1$, the circle of radius $1$ being a natural boundary where there is the accumulation of an infinite number of poles. As explained in section \ref{sec:MTconeq}, 
 these increasing radii are associated with contributions from increasing minimal-cycle lengths of {\it equivalence permutations} which correspond to different factors in the infinite product formula for $ \cZ_d ( x ) $. 
 
In this section, we explain using known results from analytic combinatorics (as collected notably in \cite{FlajoletSedgewick:2009}) that this mechanism for all-orders asymptotics extends to any generating function $ \cZ ( x ) $ which is meromorphic within the unit disc and has a natural boundary at $ |x| =1$.
This  analytic structure is closely related to the Pólya-Carlson theorem \cite{FlajoletSedgewick:2009} characterising a large class of combinatorial generating functions. 
The theorem asserts a dichotomy: any power series $f(z) = \sum_{n} a_n z^n$ with strictly integer coefficients and a radius of convergence $R=1$ must either be a simple rational function (i.e., a ratio of two polynomials) or it must admit the unit circle as a natural boundary.
Motivated by this generic feature, our methodology applies to a such meromorphic  generating functions developing a natural boundary on the unit circle.

As explicit examples within this more general context, we consider the generating functions for weighted partitions (see \eqref{Zweighted}), where the constant $n$-independent weights $d$ of $ \cZ_d(x)$ are replaced by more general weights $w_n$. 
We work out the first two leading asymptotic terms for two examples of weighted partition counting. The first example has the simplest form of small cycle dominance of the same form as $ \cZ_d ( x ) $.
 
More generally, we have increasing minimal cycle lengths being sub-dominant for minimal cycle lengths above some $n_0$, but there is a non-trivial re-arrangement between orders of asymptotics and minimal cycle lengths below $ n_0$. This is illustrated in the second example. The non-trivial re-arrangement can be understood from the correspondence between  cycle lengths and powers of $x$ in the factors of  generating function for weighted partitions, and corresponds on the complex analysis side, to a re-ordering between the parameter $n$ in the infinite product and the ordering of the circles of singularities in terms of increasing distance from the origin.

\subsection{All-orders asymptotics for meromorphic functions}
\label{ssec:all-order-meromo} 

We will consider functions $ F ( z ) $ which are analytic around the origin and have a sequence of poles lying on circles of radii $r_1 < r_2 < ... $. The radii accumulate at $R =1$. There are a finite number of poles at each radius. 
This general analytic structure naturally accommodates both the unrefined $d$-matrix model and the weighted partition cases which will be discussed in this section.
 We will give adapted versions of Theorem IV.7 and IV.10 of \cite{FlajoletSedgewick:2009} which are suitable  for this class of functions. 

\noindent 
{\bf Lemma 5.1 (Theorem IV.7 of \cite{FlajoletSedgewick:2009})}  If  $F(z)$  is analytic at zero and $R$ is the modulus of the nearest singularity to the origin, then the coefficient of $z^n$ in the convergent expansion of $F(z) $, denoted $f_n$,  satisfies 
\bea 
f_n  \simexp  R^{ -n } \,, 
\eea
which is read as $f_n$ has \emph{exponential type} (short for (e.t.)) $ R^{ -n} $.  Technically, this dictates that the asymptotic growth rate of the coefficients is strictly bounded by any arbitrarily small deviation from $R^{-1}$
\bea 
&& \lim_{ n \rightarrow \infty } { f_n \over A^n }  = \infty  ~~~ \hbox{if} ~~  A < R^{-1} \cr 
&& \lim_{ n \rightarrow \infty } { f_n \over A^n }  = 0 ~~~ \hbox{if} ~~  A > R^{-1} \,.
\eea

In our specific setup, we assume that $F(z)$ is meromorphic within the disk $ |z| \le  r < 1 $. 
Let the closest poles be at a distance $ |z| = r_1$. 
The generating function admits a convergent expansion
$ 
\sum_{ K =0}^{ \infty } f_K z^K 
$,
whose strict radius of convergence is exactly $ r_1  $. 
Directly applying Lemma 5.1, we conclude that the large-$K$ asymptotic behaviour of the sequence $f_K$ is governed by this dominant singularity scale:
\bea 
f_K \simexp r_1^{ -K }   \,.
 \eea 

Let  $F_1(z) $ denote the total polar part contributed by the innermost circle of singularities at $|z|=r_1$. Summing over all distinct poles $z_{1,i}$ located at this distance from the origin (labeled by $i$), the  polar part is given by:
\bea 
F_1 ( z ) = \sum_{ {  | z_{1;i} | = r_1  }   }   \cP ( F ( z ) , z_{ 1 ; i} )   \,,
\eea
where $\mathcal{P} ( F(z) , z_{1,i} )$ denotes the principal part of the Laurent expansion of $F(z)$ centered at the pole $z_{1,i}$. By subtracting this leading polar part, the residual function $F(z) - F_1(z)$ effectively has its nearest singularities removed, granting it a convergent expansion strictly within a larger disk of radius $r_2$, which is the distance to the next set of poles. 
Applying Lemma 5.1 to this subtracted function, the asymptotic growth of its Taylor coefficients is bounded by the new radius of convergence:
\bea
[ z^K ] ( F( z) - F_1 ( z ) ) \simexp r_2^{ - K } \,.
\eea 

This subtraction procedure can be further generalized. At the $m$-th concentric circle of poles $|z| = r_m$, the corresponding polar part is defined as:
\bea 
F_m  ( z ) = \sum_{ | z_{ m ; i } | = r_m    } \cP ( F ( z ) , z_{ m ; i } )  \,.
\eea
By removing the cumulative contributions of all poles up to the $m$-th circle, the remainder function
\bea 
F (z) - \sum_{ q =1}^{ m } F_q ( z ) 
\eea
becomes completely analytic within the disk $|z| < r_{m+1}$, where $r_{m+1}$ marks the distance to the $(m+1)$-th layer of singularities.
Applying Lemma 5.1 for each $m$, we have 
\bea 
[ z^K ]  ( F (z) - \sum_{ q =1}^{ m } F_q ( z )  )   \simexp r_{ m+1}^{-K} \,.
\eea

Furthermore, each isolated composite polar function $F_q(z)$ admits its own formal Taylor expansion around the origin:
\bea 
F_q ( z ) = \sum_{ K =0}^{ \infty } z^K f_{ q;K }  \,.
\eea
Since the singularities of $F_q(z)$ are situated exactly at $|z| = r_q$, a direct application of Lemma 5.1 confirms that its coefficients independently obey the expected exponential scaling: 
\bea 
f_{ q;K } \simexp r_q^{ - K }  \,.
\eea
This leads to the following proposition:
\begin{proposition}\label{propWkAs}
The sequence of functions $ f_{ q;K } $ forms a weak-asymptotic expansion for $ f_K$. 
\end{proposition}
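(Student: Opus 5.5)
The plan is to make precise what "weak-asymptotic expansion" means in the exponential-type sense of Lemma 5.1, and then verify it directly from the subtraction construction above. The target statement is: for every $m \ge 1$,
\begin{equation}
f_K - \sum_{q=1}^{m} f_{q;K} \simexp r_{m+1}^{-K}, \qquad f_{m;K} \simexp r_m^{-K},
\end{equation}
so that each successive term is exponentially smaller than its predecessor, since $r_m^{-1} > r_{m+1}^{-1}$. The remainder after truncation at order $m$ is then of strictly smaller exponential type than the last retained term. This is the Poincaré-type property, with "$o$" replaced by "of smaller exponential type".

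First, I will establish linearity of coefficient extraction. By linearity of Taylor coefficients at the origin, $[z^K]\big(F - \sum_{q\le m}F_q\big) = f_K - \sum_{q\le m} f_{q;K}$. This holds because each $F_q$ is a finite sum of principal parts, since there are finitely many poles on each circle. Hence $F_q$ is rational and analytic at $0$, and its Taylor series converges for $|z|<r_q$.

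Second, I will check that $G_m := F - \sum_{q\le m} F_q$ is analytic in $|z|<r_{m+1}$ and has a genuine singularity on $|z| = r_{m+1}$. Analyticity holds because subtracting the principal part removes each pole on the circles $r_1,\dots,r_m$. The singularity is present because $F_{m+1}$ is not subtracted, so the poles on $|z|=r_{m+1}$ survive in $G_m$. Lemma 5.1 then gives $[z^K]G_m \simexp r_{m+1}^{-K}$.

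Third, I will apply Lemma 5.1 to each rational $F_q$ to obtain $f_{q;K}\simexp r_q^{-K}$, as already noted in the text. Combining this with the strict ordering $r_1<r_2<\cdots$ gives the hierarchy of exponential types, and hence the weak-asymptotic expansion.

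The main obstacle is a subtlety in Lemma 5.1 as stated with limits. For the upper bound, one has $|f_n| \le C A^n$ for every $A > R^{-1}$, which follows by Cauchy's estimate on a circle of radius slightly less than $R$. The lower "$\limsup = \infty$" direction can fail for a plain limit when several poles on the same circle cause oscillation or cancellation; an example is the terms $\omega_n^{-jK}$. I would therefore state the lemma and the proposition with $\limsup_{K}|f_K|^{1/K} = R^{-1}$, which follows from Cauchy–Hadamard. This is all the proposition requires, since the remainder estimate only needs the upper bound $|G_{m,K}| \le C_\epsilon (r_{m+1}^{-1}+\epsilon)^K$, and this is $o(r_m^{-K})$ for small $\epsilon$. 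The word "weak" in the proposition absorbs this caveat.
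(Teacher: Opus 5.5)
Your route is the paper's. You apply Lemma 5.1 to the remainder $F-\sum_{q\le m}F_q$, which is analytic in $|z|<r_{m+1}$, and to the rational function $F_m$, then compare $r_m<r_{m+1}$. You are right that Lemma 5.1 is only true in its $\limsup$ (Cauchy--Hadamard) form. But once you make that correction, your last step no longer proves the proposition.

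In the paper a weak asymptotic expansion means
\[
\lim_{K\to\infty}\frac{f_K-\sum_{q=1}^{m}f_{q;K}}{f_{m;K}}=0 .
\]
This is the usual Poincar\'e condition; the strong version instead has the ratio with $f_{m+1;K}$ tending to $1$. So the word ``weak'' does not absorb a $\limsup$ caveat. Your bound gives $f_K-\sum_{q\le m}f_{q;K}=O\big((r_{m+1}^{-1}+\epsilon)^K\big)=o(r_m^{-K})$. Turning this into $o(f_{m;K})$ requires $|f_{m;K}|\ge (r_m^{-1}-\epsilon)^K$ for \emph{all} large $K$, whereas the $\limsup$ statement supplies this only along a subsequence. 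The cancellation you yourself flagged defeats it. If $F_m(z)=c\big(\frac{1}{1-z/r_m}+\frac{1}{1+z/r_m}\big)$, then $f_{m;K}=c\,r_m^{-K}\big(1+(-1)^K\big)$ vanishes for every odd $K$, while the remainder generically does not, so the ratio above is not even defined. Hence ``remainder of smaller exponential type than the last retained term'' is a genuinely different statement. Your sentence claiming that the upper bound is all the proposition requires is where the argument breaks.

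The missing ingredient is a lower bound on $|f_{m;K}|$, which needs a nondegeneracy input. For $\mathcal{Z}_d$ the poles on the $m$-th circle are simple and located at $d^{-1/m}\omega_m^j$. So $f_{m;K}=d^{K/m}P_m(K)$ with $P_m(K)=\sum_{j=0}^{m-1}c_{m;j}\omega_m^{-jK}$, which is periodic of period $m$. If $P_m$ has no zero on the $m$ residue classes, then $|f_{m;K}|\ge \min_k|P_m(k)|\,d^{K/m}$, and the ratio is $O\big((r_m(r_{m+1}^{-1}+\epsilon))^K\big)\to 0$ for small $\epsilon$. Nonvanishing of $P_m$ is plausible from the dominance of $c_{m;0}$ over the $j\neq 0$ coefficients (cf.\ Table~\ref{Table:cvalues} and \eqref{eq:cnj-generald}), but it has to be checked. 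In the general class of Section~\ref{ssec:all-order-meromo} it must be assumed, or the proposition restated in your exponential-type form.

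The paper's proof asserts that the ratio behaves as $(r_{m+1}/r_m)^{-K}$. It tacitly uses the two-sided (limit) reading of Lemma 5.1 for the denominator, so it relies on the same missing bound. Your correction of Lemma 5.1 is exactly what exposes it.
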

From the defintion of weak asymptotic expansions ( e.g. \cite{wiki:asymptotic_expansion,olver,benderorszag} )  this equivalently means that 
\bea\label{DefWkAs}  
\lim_{ K \rightarrow \infty }   {  f_K  - \sum_{ q =1}^{ m } f_{ q;K }    \over  f_{ m;K } }   = 0  \,.
\eea

\begin{proof}
Using Lemma 5.1, 
\bea 
f_K  - \sum_{ q =1}^{ m } f_{ q;K }    \simexp r_{ m+1}^{ - K } \hbox{ for large $K$ }  \,.
\eea
On the other hand, 
\bea 
 f_{ m;K } \simexp r_{ m }^{ - K }   \hbox{ for  large $K$ }  \,.
\eea
The ratio in \eqref{DefWkAs} behaves as $ (r_{ m+1} /r_m )^{ -K } $ which vanishes at large $K$, since $ r_{ m+1} /r_m > 1$.  The proves the proposition. 
\end{proof}

\noindent 
{\bf Lemma 5.2 (Theorem IV.10 of \cite{FlajoletSedgewick:2009})} A function $F(z)$ which is analytic at $ z= 0 $, at $ |z|=R$ and has a finite number of poles $ \alpha_1 , \cdots , \alpha_m$ of orders $ s_1 , \cdots , s_m $  for $ |z| < R$, has coefficients of its Taylor expansion around $ z=0$,  behaving as
\bea\label{Lem5pt2}  
f_K \sim \sum_{ i =1}^m \Pi_j ( K ) \alpha_j^{ -K } + O ( R^{ - K  } ) \,,
\eea
where each $\Pi_j ( K )$ is a polynomial in $K$ of degree exactly $ (s_j -1)$. 

We will now  prove the following proposition. 
\begin{proposition} 
	The sequence of functions $f_{ q;K } $ forms a strong-asymptotic expansion for $f_{ K } $, i.e. 
	\bea\label{defstrng} 
	\lim_{ K \rightarrow \infty }   {   f_{ K }  - \sum_{ q =1}^{ m } f_{ q;K }    \over  f_{ m+1;K }  }   = 1 \,,
	\eea
\end{proposition}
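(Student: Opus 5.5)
The plan is to reduce the strong-asymptotic statement \eqref{defstrng} to two estimates: an upper bound on the remainder after subtracting $m+1$ circles of poles, and a lower bound on the single term $f_{m+1;K}$.

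\textbf{Step 1: rewrite the numerator.} Since $F_{m+1}$ is the polar part on the circle $|z|=r_{m+1}$, we have the exact identity
\begin{equation*}
f_K-\sum_{q=1}^m f_{q;K} \;=\; f_{m+1;K} \;+\; [z^K]\Big(F(z)-\sum_{q=1}^{m+1}F_q(z)\Big).
\end{equation*}
The ratio in \eqref{defstrng} is therefore $1+\rho_K$, where
\begin{equation*}
\rho_K \;=\; \frac{[z^K]\big(F-\sum_{q\le m+1}F_q\big)}{f_{m+1;K}}.
\end{equation*}
It suffices to show $\rho_K\to 0$.

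\textbf{Step 2: upper bound on the remainder.} Rather than use the merely exponential-type statement of Lemma 5.1, I would apply Lemma 5.2 to $G=F-\sum_{q\le m}F_q$. Fix any $R$ with $r_{m+1}<R<r_{m+2}$. Then $G$ is analytic at $0$ and on $|z|=R$, and its only poles in $|z|<R$ are the finitely many poles $\alpha_j$ on the circle $|z|=r_{m+1}$. Lemma 5.2 then gives
\begin{equation*}
[z^K]G=\sum_j \Pi_j(K)\alpha_j^{-K}+O(R^{-K}).
\end{equation*}
The polynomial sum is exactly $f_{m+1;K}$, by partial fractions of the principal parts. So the numerator of $\rho_K$ is $O(R^{-K})$.

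\textbf{Step 3: lower bound on $f_{m+1;K}$ (main obstacle).} We need $|f_{m+1;K}|\ge c\, K^{s-1}r_{m+1}^{-K}$ for all large $K$, where $s$ is the maximal pole order on that circle. This is where the argument can fail for a general $F$. The term $f_{m+1;K}=\sum_j\Pi_j(K)\alpha_j^{-K}$ is a sum of oscillating terms of equal modulus growth, and these can cancel along subsequences of $K$; for instance $1/(1-z^2)$ has vanishing odd coefficients. In that case \eqref{defstrng} cannot hold literally for all $K$. I would therefore add a non-degeneracy hypothesis: one pole on the circle (or one subset of top-order poles) dominates the others in the sense that
\begin{equation*}
|\text{leading coefficient of }\Pi_{j_0}|\;>\;\sum_{j\neq j_0}|\text{leading coefficient of }\Pi_j|.
\end{equation*}
Under this hypothesis the reverse triangle inequality gives the required bound. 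Alternatively, one could state the result for $K$ restricted to the subsequence on which $|f_{m+1;K}|$ stays comparable to $K^{s-1}r_{m+1}^{-K}$.

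For the $d$-matrix functions $\cZ_d$ the hypothesis is checkable. All poles are simple, so each $\Pi_j$ is the constant $c_{m+1;j}$. The real root dominates: $|c_{n;0}|>\sum_{j\ne0}|c_{n;j}|$, which is supported by \eqref{eq:cnj-generald} for large $n$ and by Table \ref{Table:cvalues} for small $n$. For finitely many $n$ this can be verified directly.

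\textbf{Step 4: conclusion.} Combining the bounds, $|\rho_K|=O\big((r_{m+1}/R)^K K^{1-s}\big)\to0$. This proves \eqref{defstrng} under the stated non-degeneracy hypothesis.
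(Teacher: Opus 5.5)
Your argument is correct, and its skeleton matches the paper's proof in Appendix~\ref{appen:proof}. Both split $F-\sum_{q\le m}F_q=F_{m+1}+H_{m+1}$ with $H_{m+1}$ analytic on a disc of radius $R>r_{m+1}$, and both show that $[z^K]H_{m+1}$ is negligible against $f_{m+1;K}$.

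Where you differ is in how that comparison is justified, and your version is the sound one. The paper infers $[z^K]H_{m+1}=o(f_{m+1;K})$ from $[z^K]H_{m+1}\simexp r_{m+2}^{-K}$ and $f_{m+1;K}\simexp r_{m+1}^{-K}$. But exponential type (Theorem IV.7 of the cited reference) is a $\limsup$ notion. It bounds $|f_{m+1;K}|$ from above for all large $K$, but from below only along a subsequence. The inference therefore needs exactly the lower bound you isolate in Step 3; the paper's restatement of Lemma 5.1 as $\lim f_n/A^n=\infty$ for $A<R^{-1}$ overstates what that theorem gives.

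Your cancellation example shows the proposition as stated can genuinely fail. A first circle carrying $\tfrac12\big(\frac{1}{1-3z}+\frac{1}{1+3z}\big)$ gives $f_{1;K}=0$ for every odd $K$. So your non-degeneracy hypothesis is a needed correction to the statement, not a weakness of your proof. Getting a uniform $O(R^{-K})$ bound from Lemma 5.2 (or simply Cauchy's estimate on $|z|=R$) is the right way to make the upper estimate hold for all $K$. When higher-order poles are present, impose the dominance condition on the coefficients of $K^{s-1}$, with $s$ the maximal order on the circle, as you indicate.

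For $\mathcal{Z}_d$ you can state the sharp condition. The $n$-th circle contains exactly the simple poles $d^{-1/n}\omega_n^j$, so $d^{-K/n}f_{n;K}=\sum_{j}c_{n;j}\omega_n^{-jK}$ is $n$-periodic in $K$. The proposition then holds at level $m+1=n$ if and only if none of these $n$ discrete Fourier sums vanishes. Your inequality $|c_{n;0}|>\sum_{j\neq0}|c_{n;j}|$ is a convenient sufficient test for this. For fixed $m$ it only needs checking at the single value $n=m+1$; citing \eqref{eq:cnj-generald} for all $n$ would inherit that estimate's heuristic status.
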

The proof of this proposition will be in Appendix \ref{appen:proof}.

\subsection{Weighted partitions } 
\label{sec:Wtparts} 

Our complex-analytic methodology can be readily applied to a variety of combinatorial models to extract exact degeneracy counting functions. 
In the rest of the section, we will consider another class of generalized weighted partition functions of the form by the similar algorithm:
\begin{equation}\label{Zweighted}  
	\mathcal{Z}( \vec w ; x) = \prod_{n=1}^\infty \frac{1}{1 - w_n x^n} = \sum_{K=0}^\infty Z( \vec w ; K) x^K \,. 
\end{equation}
Physically, $Z(\vec w ; K)$ counts partitions of $K$ where each part $n$ is weighted by $w_n$. From an analytic perspective, the structure of $Z(\vec w ;K)$ is governed by the radii of the poles $r_n = w_n^{-1/n}$. 
To ensure a natural boundary at the unit circle $|x|=1$, we require the weights to be sub-exponential, i.e., $\lim_{n \to \infty} \frac{1}{n} \ln |w_n| = 0$. This condition is satisfied by weights that grow polynomially (e.g., $w_n \sim n^a$) or as $w_n \sim e^{n^\alpha}$ with $\alpha < 1$. When the specific weights $\vec{w}$ are evident from the context, we will adopt the simplified notation $ \mathcal{Z}( \vec w ; x) \rightarrow \cZ ( x )  $ and $ Z ( \vec w ; K ) \rightarrow Z ( K ) $.

\subsubsection*{Example 1: Partition with increased weights $w_n=n+1$} 
\label{sec:Wtprts1}

Consider the generating function
\begin{equation} \label{eq:n+1partitoinfunction}
\mathcal{Z}(x) = \prod_{ n=1}^{ \infty  } { 1 \over  1 - ( n+1) x^n  }  = \sum_{K=0}^\infty \tilde{Z}(K) x^K \,,
\end{equation}
which serves as the generating function of the series OEIS A074141 \cite{oeisA074141}.
The first few terms of this sequence are:
\begin{equation}
	1, 2, 7, 18, 50, 118, 301, 684, 1621, 3620, 8193, \cdots
\end{equation}
This sequence represents the sum over partitions of $K$, where each partition is weighted by the plus one shifts of all parts of the partition. A crucial feature of this model is the strict monotonicity of the pole radii $r_n = (n+1)^{-\frac{1}{n}}$. 
This property is equivalent to the inequality, for all $n \in \{ 1, 2, \cdots \}$
\begin{equation} \label{eq:}
	(n+2)^n < (n+1)^{n+1} \iff (n+1) >\left( 1 + \frac{1}{n+1} \right)^n \,. 
\end{equation}
This follows since the binomial expansion gives 
\bea 
n+1  - \left( 1 + \frac{1}{n+1} \right)^n  = n - \sum_{j=1}^{ n }  { 1 \over j! }  { n ( n-1) \cdots ( n - j +1 ) \over (n+1)^j } >0  \,.
\eea
The sum has $n$ terms each of which is less than $1$. 

Because $r_n$ monotonically increases toward 1, the principle of ``Small Cycle Dominance" holds strictly: the singularity closest to the origin is the simple pole at $x_{1;0} = 1/2$. 
Furthermore, all singularities are non-degenerate, forming an exact layer structure of isolated concentric poles. 
We can thus apply the partial fraction decomposition from \eqref{eq:pole-expansion}. 
By expanding the partition function asymptotically in terms of its poles and residues, we obtain the coefficients:
\begin{equation}
\tilde{Z}(K) \sim \sum_{n=1}^\infty \sum_{j=0}^{n-1} c_{n;j} \omega_{n}^{-j K} (n+1)^{\frac{K}{n}} \,.
\end{equation}
This structure closely mirrors our analysis of the constant-weight $d$-matrix model. 
Consequently, the leading asymptotic behaviour of $Z(K)$ is governed by the $n=1$ pole while the first subleading correction is dictated by the $n=2$ roots.
Truncating the asymptotic series to the first two orders yields:
\begin{equation}
\tilde{Z}(K) \sim 18.56\times 2^K -114.48\times 3^{\frac{K}{2}} +0.266\times (-1)^K \times 3^{\frac{K}{2}} +\cdots \,,
\end{equation}
where the prefactor of the leading term is explicitly evaluated via the infinite product:
\begin{equation} 
\prod_{ k=2}^{ \infty } { 1 \over  1 - ( k+1) 2^{ - k }  }  \approx 18.56 \,.
\end{equation} 
 
To better understand the underlying configurations, we can geometrically expand each factor in the product \eqref{eq:n+1partitoinfunction} as follows:
\begin{equation}
\mathcal{Z}(x) = \prod_{n=1}^\infty \sum_{p_n=0}^\infty(n+1)^{p_n}x^{n p_n} = \sum_{K=0}^\infty  \sum_{p\vdash K}  \prod_{n=1}^K (n+1)^{p_n} x^K\,.
\end{equation}
The coefficients of this series can then be systematically collected to yield :
\begin{equation}
	\tilde{Z}(K) = \sum_{p \vdash K} \prod_{i=1}^K (i+1)^{p_i} \equiv \sum_{p\vdash K} \tilde{N}(p) \,.
\end{equation}

We can then classify the sum over all partitions in terms of the subsets $\mathcal{P}_l(K)$. 
 Specifically, the contribution from $\mathcal{P}_1(K)$, which encapsulates all partitions where the smallest part has a length of $1$ (parametrized as $[1^{K-m_1}, r]$ with $r \vdash m_1$), can be evaluated as:
\begin{align}
	\begin{split}
\tilde{Z}_1 (K) &= \sum_{m_1=0}^{K-1} 2^{K-m_1} \prod_{i=2}^K (i+1)^{p_i} \\
& \sim 2^K \sum_{m_1=0}^\infty 2^{-m_1} \Coeff \left(x^{m_1},\prod_{n=2}^\infty \frac{1}{1-(n+1)x^n}\right) \\
&= 2^K \times \prod_{n=2}^\infty \frac{1}{1-(n+1)2^{-n}} \,.
	\end{split}
\end{align}
This analytic result exactly reproduces the leading asymptotic prefactor, confirming that the $\mathcal{P}_1$ configurations—those dominated by the smallest cycle length—are indeed the dominant contributors to the overall degeneracy.

Notably, the first non-perturbative correction remains negative, a feature consistent with the behaviour observed in the unrefined $d$-matrix theory.
This negative sign is not a mere numerical artifact but a direct mathematical consequence of the small cycle dominance mechanism.
Combinatorially, the leading asymptotic behaviour isolates partitions dominated by the set $\mathcal{P}_1(K)=[1^{K-m_1},r]$, $r\vdash m_1$, 
with smallest cycle of length $1$.
As the $2$-matrix model discussed in Section \ref{ssec:unrefined-smallcycle}, to obtain a closed-form analytic expression for the leading order, we relax this constraint and extend the summation over $m_1$ to infinity.
Consequently, the computation of the first subleading correction must not only incorporate the genuine contributions from the next dominant cycle structures (such as partitions dominated by $\mathcal{P}_2(K)$),
but also subtract the over-counted artificial "tail" of the 1-cycle summation from $m_1 = K$ to $\infty$. 
This subtraction process, which inherently mirrors the procedure in $d$-matrix theory, naturally generates the negative prefactor. 
Therefore, the negative sign is inextricably linked to the algebraic structure of the small cycle dominance approximation itself.

\subsubsection*{Example 2: $w_n=n$, partition norms} 
\label{sec:Wgtparts2}

There are also examples where the smallest cycle does not correspond to the dominant configuration. A specific weight choice of $w_n=n$ yields the partition function 
\begin{equation} 
	\mathcal{Z} ( x ) = \prod_{ n=1}^{ \infty } { 1 \over 1 - n x^n } =\sum_{K=0}^\infty \dot{Z}(K) x^K \,.
\end{equation}
whose coefficients $\dot{Z}(K)$,  the first few of which correspond to the sequence OEIS A006906 \cite{oeisA006906}, are:
\begin{equation}
	1, 1, 3, 6, 14, 25, 56, 97, 198, 354, 672  \cdots
\end{equation}
The coefficient counts the weighted summation over all the partitions of $K$
\begin{equation}\label{eq:def-Np}
	\dot{Z}(K) = \sum_{p\vdash K} N(p)\,, \quad	N(p) = \prod_{i=1}^K i^{p_i} \,, \quad p=[1^{p_1},2^{p_2}\cdots K^{p_K}]\,,
\end{equation}
where $N(p)$ is known as the \emph{norm of partition} defined for given partition $p\vdash K$ \cite{sills2019product,kumar2021analytic,Rana2025partition},
acting as a statistical weight for each partition of the integer $K$. 
Combinatorially, the function $\dot{Z}(K)$ counts the number of \emph{dotted Young diagrams}, where exactly one dot is placed in each row of the Young diagram.

The singularities are located at $x^n = n^{-1}$, corresponding to pole radii  $r_n = n^{-\frac{1}{n}}$. 
Crucially, this sequence of radii is not monotonic: it starts at $r_1=1$, reaches a global minimum at $n=3$  ($r_3 = 3^{-1/3} \approx 0.693$), and then monotonically increases back towards $1$ as $n \to \infty$.
Furthermore, the identity $r_2=r_4$ implies that the corresponding poles next to the origin are degenerate.
Applying the partial fraction decomposition, and carefully accounting for these second-order degenerate poles, the leading terms—ordered by their distance from the origin—are given by:
\begin{align}
	\begin{split}
		\mathcal{Z}(x) &= \frac{c_{3;0}}{1-3^{\frac{1}{3}}x} + \frac{c_{3;1}}{1-3^{\frac{1}{3}} \omega_3^{-1} x} + \frac{c_{3;2}}{1-3^{\frac{1}{3}} \omega_3^{-2} x}  \\
		&+ \frac{d_1}{(1-\sqrt{2}x)^2} + \frac{d_2}{(1+\sqrt{2}x)^2}
		+ \frac{d_3}{1-\sqrt{2}x} + \frac{d_4}{1+\sqrt{2}x} \\
		& + \frac{c_{4;0}}{1+i\sqrt{2}  x} + \frac{c_{4;1}}{1-i\sqrt{2}  x} +\cdots \,,
	\end{split} 
\end{align}
where $d_{1,2,3,4}$ are terms contributed by the degenerate poles from $n=2,4$ factors.
These residue coefficients can be numerically evaluated as in \eqref{eq:def-cnj}.
The Taylor series expansion of the right-hand side then yields the asymptotic behaviour:
\begin{align}\label{eq:ZKwn=n}
	\begin{split}
		Z(K) &= 97923\times 3^{\frac{K}{3}} + (0.114+0.01i)\times 3^{\frac{K}{3}} e^{-\frac{2\pi i}{3} K} + (0.114-0.01i)\times 3^{\frac{K}{3}} e^{\frac{2\pi i}{3} K} \\
		&- [9165.9(K+1) - 460275] \times 2^{\frac{K}{2}} + [0.091(K+1)-1.747] \times (-1)^K 2^{\frac{K}{2}} \\
		& +(0.039+0.008i)\times e^{\frac{\pi i}{2} K} 2^{\frac{K}{2}} + (0.039-0.008i)\times e^{-\frac{\pi i}{2} K} 2^{\frac{K}{2}} + \cdots
	\end{split}
\end{align}
The first line should be combined to give the leading order of large $K$ asymptotics. 
As expected, the second-order degenerate poles ($r_2=r_4$) generate polynomial enhancements (linear in $K$) accompanying the exponential growth, distinct from standard perturbative $1/K$ corrections.
These are precisely the logarithmic corrections to the entropy.

We now turn to the dominant partition configurations underlying this model. 
By expanding the generating function, we find:
\begin{equation}
\mathcal{Z}(x) = \prod_{n=1}^\infty \sum_{p_n=0}^\infty n^{p_n} x^{n p_n} = \sum_{K=0}^\infty \sum_{p\vdash K} \prod_{i=1}^K i^{p_i} x^K = 
\sum_{K=0}^\infty \sum_{p\vdash K} N(p) x^K \,.
\end{equation}
In this scenario, the dominant configuration cannot belong to $\mathcal{P}_1$.
The corresponding weight factor of 1-cycle configurations (i.e., $p_1 = K-m$) is $1^{p_1} = 1$, providing no exponential enhancement.
Let us introduce $\tilde{\mathcal{P}}_l(K)$ to denote the set of partitions that include at least one cycle of length $l$. 
Given the definition of $N(p)$ in \eqref{eq:def-Np}, we assert that the dominant configurations are governed by $\tilde{\mathcal{P}}_3(K)$.

To see this, recall that the cycle multiplicities $p_i$ are subject to the constraint $\sum_{i} i p_i = K$.
Assuming $\tilde{\mathcal{P}}_l$ represents the dominant configuration class, the norm scales as:
\begin{equation}
	N(p) = l^{p_l} \prod_{i \neq l} i^{p_i} = l^{\frac{K}{l}-m} \prod_{i \neq l} i^{p_i} \,.
\end{equation}
In the large $K$ limit, maximizing this exponential growth is equivalent to maximizing the base  $l^{\frac{1}{l}}$.
For integer $l \ge 1$, this function reaches its global maximum at $l=3$.
Without loss of generality, let us focus on the case where 
\begin{equation}
	\tilde{\mathcal{P}}_3 = [3^{L- m_3}, r], \qquad r\vdash 3m_3 \,. 
\end{equation}
The corresponding contribution evaluates to:
\begin{align}
	\begin{split}
\tilde{Z}_3(K) &= \sum_{m_3=0}^{L-1} 3^{L-m_3} \sum_{r\vdash 3m_3, 3\notin r}\prod_{i\neq 3} i^{p_i} \\
& \sim 3^L\sum_{m_3=0}^{\infty} 3^{-m_3} \Coeff\left(x^{3m_3}, \prod_{n\neq 3} \frac{1}{1-nx^n} \right) \,.
 	\end{split}
\end{align}
This combinatorial derivation precisely recovers the leading asymptotic term for $K=3L$ presented in \eqref{eq:ZKwn=n}, confirming the partition set $\tilde{\mathcal{P}}_3$ contributes the leading order of the degeneracy.

More generally, for models governed by an arbitrary non-constant weight sequence $w_n$, the critical cycle length $\ell_\star$ characterizing the dominant partition configuration $\tilde{\mathcal{P}}_{\ell_\star}$ is strictly determined by the integer that maximizes the value of $w_n^{1/n}$. Equivalently, from a complex-analytic perspective, this corresponds to the index that minimizes the pole radius $r_n = w_n^{-1/n}$. This underlying combinatorial principle provides a robust and universal framework for systematically extracting the all-orders asymptotics across a wide variety of weighted partition models.

\section{Asymptotic probability distributions of trace structures  } 
\label{sec:trace-picture}

We showed in section \ref{sec:MTconeq} that the large $ K = (m+n) $ asymptotics of the 
counting function $ \cZ(m,n ) $  of 2-matrix invariants, and of the unrefined version $ \cZ_d (x )$, 
is organised by the cycle structures of the 
equivalence permutations $ \gamma$ in \eqref{permequivs}. These permutations relate different permutations 
$ \sigma \in S_{ m+n}$ which determine the contraction of  the upper indices of $ X^{ \otimes m } \otimes Y^{ \otimes  n }$ as in 
\eqref{Osigma}.  The cycle structure of $ \sigma $ is directly related to the trace structure of the invariant. 
The natural question then arises:  what can be said about the asymptotic distribution of the trace structures (the cycle structures of $\sigma$) themselves, in the context of the all-orders asymptotic expansion we have developed ?

To answer this, we return to the Burnside formula \eqref{Zmnfxdpt}, rewritten here for convenience as
\bea
Z ( m , n ) = { 1 \over m! n! } \sum_{ \gamma \in S_m \times S_n } \sum_{ \sigma \in S_K } \delta ( \sigma \gamma \sigma^{-1} \gamma^{-1} )  \,.
\eea
The cycle structure of $ \gamma$ is specified by a partition $p \vdash m$ and a partition $q \vdash n$. 
By reorganizing the sum over conjugacy classes, the counting function can be expressed as:
\begin{equation}
Z ( m , n ) =  \sum_{ p \vdash m }  \sum_{ q \vdash n } { 1 \over (\Aut_m p) (\Aut_{ n} q )}  \Big| G ( p,q)  \Big|  \,,
\end{equation}
where the stabilizer subgroup $ G ( p , q ) $ consists of all permutations $\sigma \in S_K$ satisfying $\sigma \gamma \sigma^{-1} = \gamma$.
This stabilizer subgroup is isomorphic to a product of wreath products over the cycle lengths $i$ (see  \cite[Chapter 1]{MacdonaldSymmetricFunctions}, \cite[Chapter 1]{JamesKerberSymmetricGroup} for further information on these groups)
\begin{equation}
G ( p , q ) \equiv \prod_{ i=1}^K S_{ p_i +q_i  }  [ \mathbb{Z}_i ] \subset S_{ K } \,.
\end{equation}
Here, $p_i+q_i$ is the total number of cycles of length $i$ in $\gamma$, $\mathbb{Z}_i $ acts within individual cycles, and $S_{p_i+q_i}$ permutes cycles of the same length. 
The order of this group is thus exactly $|G(p,q)| = \prod_i i^{ p_i + q_i } ( p_i + q_i ) ! $.

The  cycle structures of the permutations in this wreath product are  the cycle structures of $ \sigma $,  which determined the trace structures. This is encoded in a known function which we will call the cycle-counting-function $ Z_{\rm  cyc. count.  } ( G ( p , q ) ; P )$  for the subgroup $ G ( p , q ) $, closely related to the cycle index  of the $  G ( p , q ) $ (See \cite{Cameron1994,GouldenJackson1983} for further information on wreath products and their cycle indices.). 
The  $P$ is a partition of $K$ which specifies a conjugacy class in $S_{ K} $. We have 
\bea\label{cyccountfun} 
&& Z_{\rm  cyc. count.  } ( G ( p , q ) ; P ) = \hbox { Number of permutations with cycle structure $P \vdash K $ } \cr 
&& \hbox{ in the subgroup  } G ( p , q ) \hbox{ of } S_K  \,.
\eea
There are nice generating functions for these cycle counting functions of wreath products of the form at hand. 
We can now write the counting function of multi-traces as a sum over $ p \vdash m , q \vdash n , P \vdash (m+n)$
\bea\label{PFcyccountfun} 
\boxed{ 
Z ( m , n )  = \sum_{ p \vdash m } \sum_{ q \vdash n } { 1 \over \Aut_m p \Aut_{ K -m} q }  \sum_{ P \vdash K } Z_{\rm  cyc. count.  } ( G ( p , q ) ; P )
}  \,.
\eea
For fixed $p,q$, we have a distribution over trace-structures (cycle-structures of $ \sigma $) given by 
$
Z_{\rm  cyc. count.  } ( G ( p , q ) ; P ) \,.
$
When we specify some small cycles for $(p,q)$ contributing to the leading order of the asymptotics, for each specified cycle structure, we have a distribution of multi-trace structures given by this cycle count function. 

The key lesson is that the  leading asymptotics which we have computed using the circles of poles method  localises  in cycle structures of $ \gamma$  - specifically small cycles - but is  a probability distribution   over cycles structures in $ \sigma $ ( i.e. matrix trace structures) specified by the cycle count function in \eqref{cyccountfun}. We have 
\bea 
\sum_{ P \vdash K } Z_{\rm  cyc. count.  } ( G ( p , q ) ; P )  = |G ( p, q ) | 
\eea 
which can be equivalently recast to a normalized probability distribution:
\bea 
\sum_{ P \vdash K }  { Z_{\rm  cyc. count.  } ( G ( p , q ) ; P )  \over |G ( p, q ) |  }  =1  \,.
\eea
The summands are positive and add up to one, so can be interpreted as probability distributions over cycle structures of $ \sigma $, i.e. over $P \vdash K$, i.e. over multi-trace structures, for each choice of $(p,q)$. 

In terms of the probabilities we can write \eqref{PFcyccountfun} as 
\bea\label{PFTraceStructProb} 
Z ( m , n )  && = \sum_{ p \vdash m } \sum_{ q \vdash n } {  | G ( p , q ) | \over (\Aut_m p)( \Aut_{ K -m} q) }  \sum_{ P \vdash K } { Z_{\rm  cyc. count.  } ( G ( p , q ) ; P ) \over | G ( p , q ) | }  \cr 
&& = \sum_{ p \vdash m } \sum_{ q \vdash n } \prod_{ i } {  ( p_i + q_i ) ! \over p_i! q_i! } \sum_{ P \vdash K } { Z_{\rm  cyc. count.  } ( G ( p , q ) ; P ) \over | G ( p , q ) | }  \,.
\eea 
In this equation, the sum over partitions $P$ which keep track of the multi-trace structure is weighted by probabilities
that depend on the partitions $ p , q $ of $m,n$ respectively, and increasing cycle lengths on $p,q$ control the orders of the large $K$ asymptotics.

The equations \eqref{PFcyccountfun} and \eqref{PFTraceStructProb} give an expression for $Z ( m, n ) $ with a sum over the cycle structures $P$ of the index contraction permutations $ \sigma $, which keep track of the trace structures of the matrix invariants, along with a sum over the cycle structures $ p \circ q $ of the permutations $ \gamma$, which are the equivalence permutations. Well-known formulae in the  literature connect with the sum over $ P$, without  the additional refinement of summation over $ p,q$ which keep track of orders of the asymptotic expansion as shown in section \ref{sec:MTconeq}. The counting organized by the trace structure, in the $d$-matrix case,  $\mathcal{Z}_d(x)$ can be written as 
\begin{align}\label{eq:Zdx-tracetocycle}
\begin{split}
	\mathcal{Z}_d(x) &= \prod_{i=1}^\infty \frac{1}{1-dx^i} = \prod_{i=1}^{\infty} \frac{1}{(1-x^i)^{a_i}},\qquad a_i= 	\frac{1}{i} \sum_{q|i} \phi(q) d^{\frac{i}{q}} \,.
	\end{split}
\end{align}
where $\phi(q)$ is the Euler totient function. 
The proof of \eqref{eq:Zdx-tracetocycle} is straightforward via standard combinatorial enumeration \cite{FlajoletSedgewick:2009}.
The exponents $a_i$ exactly count the degeneracies of single-trace operators (or single-letter indices), which generate the full multi-trace partition function $\mathcal{Z}_d(x)$ via the Plethystic exponential \cite{Feng:2007ur}. 
These expressions are reviewed in section 5 of \cite{LBSR1} and related to the mathematical literature on the combinatorics of Lyndon words. 
The identity  \eqref{eq:Zdx-tracetocycle} provides a combinatorial formula as a sum over $P \vdash K$:
\begin{equation}\label{eq:Z2K-trace-coeff-Com}
	Z_d(K) = \sum_{P\vdash K} \prod_{j=1}^K C_{P_j+a_j-1}^{P_j}, \qquad P=[1^{P_1},2^{P_2},\cdots K^{P_K}] \,.
\end{equation}
which is related to \eqref{PFcyccountfun} and \eqref{PFTraceStructProb} after sums over $p,q$ in these equations. 

A more systematic study of the asymptotic distributions of all the trace structures using \eqref{PFTraceStructProb} is an important 
direction for the future. This will connect with studies of the large $K$ counting of single traces (e.g. in the context of small black holes in AdS/CFT \cite{Berenstein:2018lrm}) and take into account the multiplicity of energy eigenstates arising within single traces, along the plethystic exponentiation of the single traces.

\section{Summary and Outlook }\label{sec:discussion}

\subsection*{Summary}

In this work we have developed an analytic framework for the degeneracies $Z_d(K)$
associated with $d$-matrix partition functions, which arise in the counting of
gauge-invariant operators in supersymmetric sectors of large-$N$ gauge theories.
By analysing the singularity structure of the generating function
\begin{equation}
	\mathcal{Z}_d(x)=\prod_{i=1}^{\infty}\frac{1}{1-dx^i},
\end{equation}
we showed that the coefficients admit an all-orders asymptotic expansion obtained
by systematically subtracting pole contributions located on concentric circles in
the complex $x$-plane. The dominant pole at $x=1/d$ encodes the familiar Hagedorn
growth of states, while subleading poles generate exponentially suppressed
corrections refining the degeneracy formula. These contributions are naturally
organised by an integer $n$ labelling the pole layers, which admits a combinatorial
interpretation in terms of permutations whose minimal cycle length is $n$, leading
to a picture of \emph{small-cycle dominance} in the asymptotics of multi-trace
operators. An interesting physical aspect is that the subleading contributions
beyond the Hagedorn singularity encode information about the analytic continuation
of the partition function and ultimately reflect the low-temperature expansion
around $x=0$. In particular, for $d\ge13$ the coefficients of the pole expansion
become exponentially suppressed, rendering the series absolutely
convergent; in this regime the pole expansion provides a convergent
 representation of the partition function for $ |x| < d^{-1} $, which can be analytically continued inside the unit disc. 
 This can be viewed as a generalisation of the Mittag-Leffler expansion, where the usual  condition of boundedness on the complex plane is relaxed.  Together these results provide a unified analytic and
combinatorial description of the large-energy behaviour of multi-matrix spectra
and clarify how the singularity structure of the partition function controls the $d^K$ 
Hagedorn growth and its sub-exponential corrections.

From a broader mathematical perspective, our results highlight a class of generating
functions whose analytic structure is characterised by infinitely many simple poles
arranged on concentric circles and accumulating at the boundary $|x|=1$. While
meromorphic generating functions have long played an important role in analytic
combinatorics and partition theory, the $d$-matrix partition functions studied here
provide a particularly transparent setting where the poles can be organised
systematically and their contributions summed to obtain an all-orders asymptotic
expansion for the coefficients. The resulting hierarchy of exponential terms admits
a natural combinatorial interpretation in terms of permutations organised by their
minimal cycle length. As discussed in Section~\ref{sec:CircSings}, the same circle-of-poles framework
extends to generating functions associated with weighted partitions, indicating that
this analytic structure is not specific to the matrix model context. In this way the
present work builds on the extensive literature on meromorphic generating functions
while identifying a subclass—characterised by poles accumulating at the unit
boundary—for which the pole expansion provides an effective tool for extracting
detailed asymptotic information, resulting in some cases in convergent expansions. 

\subsection*{Discussion}

Beyond the explicit asymptotic formulas, our analysis reveals several implications for the underlying mathematical physics.
\begin{enumerate}
\item \textbf{UV/IR reconstruction}: The analytical distinction between the regimes $d \le 12$ and $d \ge 13$ admits a physical interpretation in terms of a UV reconstruction. 
For dimensions $2 \le d \le 12$, the large $K$ asymptotic expansion of the $d$-boson matrix partition function is formally divergent, whereas it transitions to absolute convergence for $d \ge 13$.  
Technically, the leading asymptotic coefficient is systematically extracted by evaluating the large-$K$ limit of $Z_d(K)/d^K$.
By subsequently subtracting this leading $d^K$ contribution, dividing the remainder by $d^{K/2}$, and taking successive large-$K$ limits, one iteratively determines the subleading coefficients. 
This sequential procedure constitutes a high-energy, or ultraviolet (UV), reconstruction of the exact degeneracies $Z_d(K)$, which are the Taylor expansion coefficients in the small $x$ low energy or infra-red (IR) region. 
The absolute convergence we have proved  and the numerical evidence  for $d \ge 13$ suggests that this UV reconstruction is exact and complete. 
Conversely, the divergence observed for $2 \le d \le 12$ indicates that recovering the exact finite-$K$ physics in lower dimensions cannot be achieved exclusively through the UV all-orders asymptotic data; rather, it requires additional 
 input, which may be associated geometrically with the natural boundary at $ |x| =1$. 

\item \textbf{Comparison to resurgence}: In the $d \le 12$ regime, the divergent remainder isolated via optimal truncation encodes further contributions to be determined, a property formally aligned with the resurgence framework \cite{Dorigoni:2014hea,Dunne:2015eaa,Aniceto:2018bis}.
Because the discrete pole expansion strictly inside the open unit disk already captures  exponential scales (of order $d^{\frac{K}{n}}$), similar to suppression of the form Farey tail or roots of unity saddles \cite{Dijkgraaf:2000fq,Manschot:2007ha}, the remaining error evaluated at the optimal truncation order represents sub-exponential  corrections (as also evidenced  by numerical tests in $d=2$). 
Geometrically, these residual terms are associated with the accumulation of singularities on the unit circle. 
This mathematical structure parallels the emergence of modular properties in the Jacobi theta function, where the exponentially suppressed terms required to complete the modular transformation originate precisely from continuous boundary integrals (See chapter 2 of \cite{elizalde1994zeta}). 
Similarly, resolving the sub-exponential pieces in the $(d\le 12)$ $d$-matrix partition function is likely to require continuous integrals evaluated along the natural boundary at $|x|=1$.

\item \textbf{Complementary region in complex plane}: The general $q$-Pochhammer symbol $(x;q)_{\infty}$, arising in the $d$-matrix partition function (equation \eqref{eq:partition-unrefine}), also arises in the holomorphic blocks and partition function for three-dimensional superconformal indices \cite{Beem:2012mb,Yoshida:2014ssa}, where its asymptotic behaviour is essential for extracting the corresponding density of states. 
Recent work \cite{ArabiArdehali:2026ddr} has analysed the asymptotics of this function strictly within the $|x|<1$ regime. 
A key geometric distinction in this context is that all relevant singularities of the $q$-Pochhammer symbol lie either on or outside the unit circle. Consequently, the complex-analytic pole-extraction methodology developed in our work is not directly applicable for capturing the leading-order asymptotics in this regime; instead, the appropriate analytic framework requires the study of modular transformations. In this sense, our analysis and the approach of \cite{ArabiArdehali:2026ddr} provide complementary perspectives on the full asymptotic structure of the $q$-Pochhammer symbol. 
\end{enumerate}

\subsection*{Future directions }
A natural progression in understanding the operator counting algorithm for the $PSU(1,2|3)$ subsector \cite{Baiguera:2022pll}— which potentially includes states dual to $\frac{1}{16}$-BPS AdS black hole \cite{Kunduri:2006ek} — is to generalize our complex-analytic methodology to larger closed subsectors of $\mathcal{N}=4$  SYM \cite{Harmark:2007px,Beisert:2004ry}. 
The simplest such extension is the $SU(1|2)$ subsector, which incorporates the two scalars analysed in this work alongside an additional chiral fermion.
This sector holds particular physical relevance due to its mapping to the $t-J$ model, a foundational framework in the theory of high-$T_c$ superconductivity \cite{Beisert:2005fw}.
The corresponding unrefined partition function is given by:
\begin{equation}\label{eq:SU23-PARTITIONFUNCTION}
	\mathcal{Z}(x)=		\prod_{n=1}^\infty \frac{1}{(1-x^{n})(1+x^{n}-x^{2n})} \frac{1}{(1+x^{n-\frac{1}{2}})(1-x^{n-\frac{1}{2}}-x^{2n-1})} \,.
\end{equation}
This partition function exhibits a hybrid analytic structure. 
It contains non-degenerate poles, with the dominant singularity located at $x=\frac{3-\sqrt{5}}{2}$.
This pole yields a Hagedorn temperature consistent with \cite{Harmark:2006ta,Harmark:2007px} and dictates an exponential leading-order degeneracy growth of:
\begin{equation}
	Z(K) \sim \left(\frac{3+\sqrt{5}}{2} \right)^K \sim 2.618^K \,.
\end{equation}
Simultaneously, the partition function possesses an infinite set of singularities akin to those of the Dedekind eta function, which independently drive a sub-exponential Hardy-Ramanujan-type growth. 
Besides it also contains singularities outside the natural boundary at $x= \frac{1+\sqrt{5}}{2}$ \footnote{Models contain both singularities interior and exterior to the natural boundary were also recently studied in \cite{Lee:2025veh} within the context of AdS$_3\times S^3\times \mathcal{M}_4$.}. 
The intricate interplay between these exponential and sub-exponential mechanisms significantly complicates the exact extraction of subleading corrections.
It is anticipated that even richer analytic structures govern the partition functions of non-compact subsectors, such as $PSU(1,1|2)$ \cite{Baiguera:2021hky} and $SU(1,2|2)$ \cite{Baiguera:2020mgk} necessitating further mathematical development.
A further interesting generalization is to turn on the 't Hooft coupling \cite{Spradlin:2004pp,Suzuki:2017ipd} in the Spin Matrix theory limit where the higher order Feynman diagrams are suppressed.

Beyond $\mathcal{N}=4$ SYM subsectors, similar analytic structures naturally emerge in the counting of local operators for general quiver gauge theories \cite{Pasukonis:2013ts} (See also \cite{McGrane:2015jza}). 
Our complex-analytic methodology provides a natural framework to decode the asymptotic degeneracies of these generalized configurations.
Examples include the Klebanov-Witten conifold and the $\mathbb{C}^3/\mathbb{Z}_2$ orbifold, whose unrefined large-$N$ partition functions take the respective forms:
\begin{align}
	\begin{split}
\text{Conifold}: & \qquad  \mathcal{Z}(x) = \prod_{n=1}^\infty \frac{1}{1-4 x^{2n}} \\
\mathbb{C}^3/\mathbb{Z}_2: & \qquad \mathcal{Z}(x) = \prod_{n=1}^\infty \frac{1}{(1-3x^n) (1+x^n)} \,.
	\end{split}
\end{align}
The conifold partition function structurally is identical to a $4$-matrix theory evaluated at $x^2$, making our asymptotic framework directly applicable. 
Conversely, the partition function of the $\mathbb{C}^3/\mathbb{Z}_2$ orbifold mirrors a hybrid analytic structure of the $SU(1|2)$ model \eqref{eq:SU23-PARTITIONFUNCTION}, combining the isolated poles of a $3$-matrix theory with an infinite family of dense boundary singularities from the $(1+x^n)^{-1}$ factor. 
We leave the precise asymptotic resolution of these hybrid boundary structures to future work.

The identification of the critical dimension $d\ge 13$ at which the asymptotic expansion of the bosonic matrix theory degeneracy transitions to absolute convergence is reminiscent of the known results in gravitational theory \cite{Sorkin:2004qq,Suzuki:2015axa,Kol:2006vu}. In the context of higher-dimensional general relativity, this specific dimension defines the boundary governing the order of the phase transition between uniform and non-uniform black strings, a phenomenon fundamentally driven by the Gregory-Laflamme instability \cite{GregoryLaflamme,HorowitzMaeda}. 
Specifically, for spacetime dimensions $D \le 13$, this instability is driven by a first-order phase transition, whereas for $D \ge 14$, the transition becomes continuous (higher-order) \cite{Harmark:2005pp,Kol:2004ww}.
It is plausible that the transition we found in the bosonic $d$-matrix theory could be a  feature shared by more  general matrix theories within an appropriate universality class. Such theories  could potentially include ones having  a bulk gravitational dual with one dimension higher. In such a hypothetical scenario, the $D\ge 13$ region for convergence which we have found would map to  a bulk transition at $ D \ge 14$. it would remain to understand whether there is a physical connection between the convergence property of the partition function with the continuity in the phase diagram of the black string. Nevertheless, the emergence of an identical critical dimension in both the exact combinatorial asymptotics of matrix theory and the thermodynamic instability of gravity points to a potential universality inherent to s large-dimension expansions in gravity  which have been studied in \cite{Emparan:2013moa,Emparan:2020inr}. 
We leave the investigation of  more precise links between these phenomena for the future. 

Our results on the analytic  properties of the $d$-matrix partition function, interpreted as a UV reconstruction as discussed above,   suggest that
for $d \le 12$ the reconstruction of finite-$K$ physics requires an interplay
between ultraviolet and infrared data, whereas for $d \ge 13$ the ultraviolet
asymptotics alone suffice. 
Multi-matrix models have shown an unreasonable effectiveness as models of fundamental physics. 
Prominent examples include the BFSS matrix model for flat-space M-theory in eleven dimensions \cite{BFSS} and the IKKT matrix model for non-perturbative type IIB string theory \cite{Ishibashi:1996xs}. 
Motivated by this, one may speculate that the critical dimension emerging in the bosonic $d$-matrix model reflects a deeper structural feature: the necessity of UV–IR complementarity in reconstructing finite-energy physics. It would therefore be interesting to explore further evidence for such a broader picture. 

Dimensional thresholds have often played an important role in
fundamental physics. A well–known example is the Brandenberger–Vafa
mechanism \cite{BrandenbergerVafa}, which attempts to explain the emergence of four large spacetime
dimensions starting from string theory. From the perspective of the present
matrix model analysis, one may ask a different question: assuming that an
underlying matrix description of spacetime plays a fundamental role, why do the
dimensions in which string and M- and F-theory structures appear ($D=10,11,12$)
lie below the critical dimension identified here? 
Also, are there dimension-dependent physical observables that experience a phase transition near the critical dimensions?

These observations suggest that UV–IR complementarity may play a broader
role in the reconstruction of finite-energy physics in string theory. Matrix models provide
a particularly tractable setting in which this phenomenon can be studied
quantitatively, but related mechanisms could conceivably appear in other
contexts where different  degrees of freedom encode spacetime physics, ranging from
microscopic descriptions such as brane dynamics to effective frameworks
including supergravity and effective field theories.

\section*{Acknowledgements}

We thank  Joseph Ben Geloun, Jie Gu, Robert de Mello Koch, Sam van Leuven and Denjoe O' Connor for useful discussions. 
Both authors thank each other's institutes hospitality where this project is in progress.
Y.L. is supported by a Project Funded by the Priority Academic Program Development of Jiangsu Higher Education Institutions (PAPD) and by National Natural Science Foundation of China (NSFC) No.12305081.
S.R. is supported by the
Science and Technology Facilities Council (STFC) Consolidated Grant ST/X00063X/1
“Amplitudes, strings and duality”. 
We are both supported by the ongoing
Royal Society International Exchanges grant IEC\textbackslash NSFC\textbackslash 242376 held jointly with the National Natural Science Foundation of China No.W2421035. AI tools, notably ChatGPT and Gemini, were used to assist in refining the clarity of exposition and the presentation of mathematical arguments; all scientific results and interpretations are those of the authors.

\appendix

\section{Numerical results of $Z_2(K)$ and its asymptotic features}
\label{appen:numericalZ2}

The asymptotic nature of the series $Z_2(K)$ is further characterized by the "optimal truncation" scheme. 
We partition the non-negative integers into disjoint intervals $\{\mathcal{C}_M  \}_{M\ge1}$ defined such that for a given $K \in \mathcal{C}_M$, the truncation at order $M$ minimizes the error:
\begin{equation}
	K\in \mathcal{C}_M \quad \Rightarrow \quad	|	Z_2(K;M)- Z_2(K)| = \min_{M'\ge 1}	|Z_2(K;M')- Z_2(K)| \,.
\end{equation}
Let $K_{\pm}(M)$ denote the upper and lower bounds of each optimal interval $\mathcal{C}_M$. Table \ref{Table:optimaltruncation} lists the truncation intervals and corresponding errors for $M=1$ to $10$. It is observed that while the absolute error increases with $K$ as a sign of divergent series, the relative error $|Z_2(K;M) - Z_2(K)| / Z_2(K)$ decreases substantially, confirming the asymptotic convergence of the expansion.
\begin{table}[]
	\centering
	\begin{tabular}{|c|c|c|c|c|}
		\hline
		Truncation & 
		\begin{tabular}{@{}c@{}}Optimal $\mathcal{C}_M$ \\
			$K_- \le K \le K_+$
		\end{tabular} & 
		Order of $Z_2(K_+)$ & $Z_2(K_+;M) - Z_2(K_+)$&
		Ratio $\frac{\text{Error}}{Z_2(K_+)}$ \\
		\hline
		$M=1$ & $K\le15$ & $10^5$ & 4000 & $10^{-2}$ \\
		\hline
		$M=2$ & $16\le K\le 38$ & $10^{12}$ &$10^6$ & $10^{-6}$ \\
		\hline
		$M=3$ & $39\le K\le 71$ & $10^{21}$ & $10^9$ & $10^{-12}$ \\
		\hline
		$M=4$ & $72 \le K\le 115$ & $10^{35}$ & $10^{12}$& $10^{-23}$ \\
		\hline
		$M=5$ & $116\le K\le 169$ & $10^{51}$ & $10^{14}$ & $10^{-37}$ \\
		\hline
		$M=6$ & $170\le K\le 233$ & $10^{70}$ & $10^{17}$ & $10^{-53}$ \\
		\hline
		$M=7$ & $234 \le K\le 308$ & $10^{93}$ & $10^{20}$ & $10^{-73}$ \\
		\hline
		$M=8$ & $309 \le K\le 391$ & $10^{118}$ & $10^{23}$& $10^{-95}$ \\
		\hline
		$M=9$ & $392\le K\le 486$ & $10^{146}$ &$10^{26}$ & $10^{-120}$ \\
		\hline
		$M=10$ & $487\le K\le 592$ & $10^{178}$ & $10^{29}$& $10^{-149}$ \\
		\hline
	\end{tabular}
	\caption{
		Truncation scheme and error analysis for $Z_2(K)$ at $K_+(M)$. For each optimal truncation order $M$ and its interval $\mathcal{C}_M$, the table lists the order of magnitude of $Z_2(K_+)$ and the corresponding truncation error. 
		In the last column, we show while the error $Z_2(K_+;M) - Z_2(K_+)$ increases with $K$, its ratio to $|Z_2(K_+)|$ keeps decreasing dramatically.
		We can also notice the error term roughly grows as $10^{3M}$.
	}
	\label{Table:optimaltruncation}
\end{table}

The Table \ref{Table:optimaltruncation} also confirms the asymptotic expansion \eqref{eq:degeneracyZ2} and \eqref{eq:numer-Z2K-suble} matching with the numerical tests.
Another way to see the asymptotic nature of the series is by its optimal truncation. 
Take the major contribution (i.e. positive real root) in the asymptotic expansion:
\begin{equation}\label{eq:summationseries}
	Z_2(K) \sim  \sum_{n}^\infty \sqrt{\frac{\ln 2}{2\pi}} (-1)^{n-1} \frac{1}{n^{\frac{3}{2}}} \exp \left( \frac{\pi^2}{4 \ln 2} n  \right) 2^{\frac{K}{n}}\,, \qquad n\to \infty \,.
\end{equation}
For such series, the error is approximated by the next term. At the optimal truncation, this error is minimised which  amounts to extremizing the summand.  
Applying the saddle-point approximation to the exponent in \eqref{eq:summationseries}, the extremization condition
\begin{equation}
	\frac{d}{dn} \left[ \frac{\pi^2}{4\ln 2} n + \frac{K}{n} \ln 2\right] =0 \,
\end{equation}
yields the optimal truncation order $M$ for a given $K$:
\begin{equation}\label{eq:saddlepoint-Kn}
	K= \frac{\pi^2}{4\ln^2 2} M^2 \approx 5.1356 M^2 \,.
\end{equation}
The close agreement with the numerical fit $K_+(M)$ listed in Table \ref{Table:optimaltruncation} are well-fitted by the quadratic function
\begin{equation}\label{eq:growthofKM}
	K_+(M) \approx 5.1174 M^2+ 7.7811 M+1.9833 \,,
\end{equation}
(from Table~\ref{Table:optimaltruncation}) validates that the large-$n$ behaviour of the $j=0$ poles accurately captures the numerical convergence properties of the model.

This analytic structure however, also has indications to the error term between the exact $Z_2(K)$ and the series approximation $Z_2(K;M)$ in \eqref{eq:Z2-truncation}.
From  the Table~\ref{Table:optimaltruncation}, the errors are asymptotically $10^{3M}$. Combined with the optimal truncation region \eqref{eq:growthofKM}, we can conjecture the errors for near the optimal truncation grows as $e^{\lambda \sqrt{K}}$ for some constant $\lambda$.
This is interesting as this subexponential growth usually originates from the generating partition function of Euler partition number up to some power $c$: $\mathcal{Z}_1(x)^c$.
In another word, they are the contributions of singularities precisely from the boundary of unit disk.

Indeed, our formula \eqref{eq:degeneracyZ2} shares a foundational philosophy with the Rademacher expansion, despite the differences in their analytic implementation. Both approaches rely on a two-step reconstruction of the degeneracy $Z(K)$ from the singular data of the partition function:
\begin{itemize}
	\item \textbf{Local Characterization}: In the Rademacher case, one exploits modular symmetry to determine the behaviour near essential singularities. In our treatment of $\mathcal{Z}_2(x)$, we characterize the local behaviour by identifying the residues of the discrete simple poles located at $x_{n;j}$.
	\item \textbf{Global Summation}: Instead of performing an inverse Laplace transform over Ford circles, we employ a partial fraction expansion (formula \eqref{eq:pole-expansion}) to sum the contributions of these poles. By expanding each partial fraction as a geometric series, the coefficients are extracted directly, bypassing the need for complex contour deformations like Ford circles.
\end{itemize}
The expansion \eqref{eq:pole-expansion} thus functions as a Farey-tail-like expansion, parametrized by rational numbers $j/n$. 
It suggests that the $2$-matrix model, much like 2D CFTs, admits a representation where the total degeneracy is a coherent sum over "saddles" (poles) indexed by rational fractions $j/n$, effectively bridging the discrete analytic structure of matrix models with the universal features of holographic systems \cite{Alday:2019vdr}.

\section{Detailed computations of the subleading asymptotics by combinatoric method}
\label{append:higher-order-combinartorics}

In this appendix, we will provide details in computing the subleading remainder terms \eqref{eq:def-cR10}. 
The sum over $n_1$ acts as roots of unity filter of the generating function $(2x^3;x)_{\infty}^{-1}$. 
When  $ ( K + m_1') $ is even ($\bar{E}=0$ in \eqref{eq:def-subleadingZ2-config}), it extracts the even part of $(2x^3;x)_{\infty}^{-1}$; when odd ($\bar{E}=1$), it extracts the odd part scaled by a factor of $\sqrt{2}$. 
Synthesizing these parity projections yields the exact closed-form expression:
\bea
&&  \cR_{ 1,0} ( K )  = \sum_{ m_1' =0 }^{ \infty } 
2^{\lfloor  { K+ m_1' \over 2 }    \rfloor  - m_1'}   
\biggl [   { 1 \over 2 }   \left (\frac{1}{(2x^3;x)_{\infty}} \Big|_{x=\frac{1}{\sqrt{2}}} 
+\frac{1}{(2x^3;x)_{\infty}} \Big|_{x=-\frac{1}{\sqrt{2}}}   \right )  \left ( 1 - \bar E ( K + m_1' ) \right ) 
\cr    
&& + { \sqrt { 2 }  \over 2   }   \left (\frac{1}{(2x^3;x)_{\infty}} \Big|_{x=\frac{1}{\sqrt{2}}} 
-\frac{1}{(2x^3;x)_{\infty}} \Big|_{x=-\frac{1}{\sqrt{2}}}   \right )   \bar E ( K + m_1' )
\biggr ]  \cr 
&& = \sum_{ m_1' =0 }^{ \infty } 
2^{\lfloor  { K+ m_1' \over 2 }    \rfloor  - m_1'}   
\biggl [   { 1 \over 2 }  
\left ({ 1 \over  ( { 1 \over \sqrt  2 } ; { 1 \over \sqrt 2 } )_{\infty} }  +  { 1 \over  ( -  {1 \over  \sqrt 2 } ; -  { 1 \over \sqrt 2 }   )_{\infty} }    \right )
\left ( 1 - \bar E ( K + m_1' ) \right ) 
\cr    
&& + { \sqrt {2}  \over 2  }  
\left ({ 1 \over  ( { 1 \over \sqrt  2 } ; { 1 \over \sqrt 2 } )_{\infty} }   -  { 1 \over  ( -  {1 \over  \sqrt 2 } ; -  { 1 \over \sqrt 2 }   )_{\infty} }    \right )
\bar E ( K + m_1' )
\biggr ] \,.
\eea 

When $K$ is even ($K = 2L$), the exponent simplifies depending on the parity of $m_1'$:
\begin{equation}
	\lfloor \frac{ K+ m_1' }{ 2 } \rfloor - m_1' = L + \lfloor \frac{ m_1' }{ 2 } \rfloor - m_1' = 
	\begin{cases} 
		L - k_1' & \text{if } m_1' = 2k_1' \,, \\
		L - k_1' - 1 & \text{if } m_1' = 2k_1' + 1 \,.
	\end{cases}
\end{equation}
Therefore the remainder term for even $K$ is
\bea\label{simpR10}  
&&  \cR_{ 1,0} ( 2L )  
= \sum_{ k_1' = 0 }^{ \infty } 2^{ L - k_1' } \cdot { 1 \over 2 } \cdot  
\left ({ 1 \over  ( { 1 \over \sqrt  2 } ; { 1 \over \sqrt 2 } )_\infty }  +  { 1 \over  ( -  {1 \over  \sqrt 2 } ; -  { 1 \over \sqrt 2 }   )_\infty }    \right )   \\
&& 
+ \sum_{ k_1' =0 } 2^{ L - k_1' -1  } \cdot { \sqrt {2}  \over 2 } \cdot  \left ({ 1 \over ( { 1 \over \sqrt  2 }; { 1 \over \sqrt 2 } )_\infty }   -  { 1 \over  ( -  {1 \over  \sqrt 2 } ; -  { 1 \over \sqrt 2 })_\infty }    \right )   \cr 
&& = 2^{ L } . { 2 \over 2  }   \left ({ 1 \over  ( { 1 \over \sqrt  2 }; { 1 \over \sqrt 2 } )_\infty }  +  { 1 \over  ( -  {1 \over  \sqrt 2 } ; -  { 1 \over \sqrt 2 } )_\infty }    \right )  
+    { 2^{ L  } \sqrt{2} \over 2  } \left ({ 1 \over  ( { 1 \over \sqrt  2 }; { 1 \over \sqrt 2 })_\infty }   - { 1 \over ( -  {1 \over  \sqrt 2 }; -  { 1 \over \sqrt 2 } )_\infty }    \right )   \nonumber \,.
\eea
This result can be simplified to \eqref{simpR10-even}. The derivation for odd $K$ follows the same steps and yields the analogous formula.

\section{Proof of proposition 5.2} 
\label{appen:proof}

In this section, we will provide a proof of Proposition 5.2 in the section \ref{ssec:all-order-meromo}.

\begin{proof} 
	Applying Lemma 5.2 to the remainder function
	\[
	R_m(z) := F(z) - \sum_{q=1}^m F_q(z),
	\]
	we analyse its singularity structure. By construction, all poles of $F(z)$ with
	$|z| \le r_m$ have been removed, so $R_m(z)$ is analytic in the disk $|z| < r_{m+1}$,
	and its nearest singularities are precisely the poles on the circle $|z| = r_{m+1}$.
	
	We can consider  radii $\tilde r_{m+1}$ such that
	\[
	r_{m+1} < \tilde r_{m+1} \le r_{m+2},
	\]
	so that $R_m(z)$ is meromorphic in $|z| < \tilde r_{m+1}$ with only finitely many poles,
	all located on $|z| = r_{m+1}$. By Lemma 5.2, the coefficients of $R_m(z)$ admit an
	asymptotic expansion determined entirely by these poles.
	
	By definition, $F_{m+1}(z)$ is the sum of the principal parts of $F(z)$ at the poles
	on the circle $|z| = r_{m+1}$. Since these are exactly the poles of $R_m(z)$ in the
	disk $|z| < \tilde r_{m+1}$, we can write
	\[
	R_m(z) = F_{m+1}(z) + H_{m+1}(z),
	\]
	where $H_{m+1}(z)$ is analytic in $|z| < \tilde r_{m+1}$.
	
	Taking coefficients, we have
	\[
	[z^K] R_m(z) = f_{m+1;K} + [z^K] H_{m+1}(z).
	\]
	By Lemma 5.1, the coefficients of $H_{m+1}(z)$ satisfy
	\[
	[z^K] H_{m+1}(z) \simexp  \,  r_{m+2}^{-K},
	\]
	while
	\[
	f_{m+1;K} \simexp  \, r_{m+1}^{-K}.
	\]
	Since $\tilde r_{m+1} > r_{m+1}$, it follows that
	\[
	[z^K] H_{m+1}(z) = o(f_{m+1;K}).
	\]
	Therefore,
	\[
	[z^K] R_m(z) = f_{m+1;K} \bigl(1 + o(1)\bigr),
	\]
	which implies
	\[
	f_K - \sum_{q=1}^m f_{q;K} \sim f_{m+1;K}.
	\]
	This establishes the strong asymptotic expansion and proves Proposition 5.2.
\end{proof}

\bibliographystyle{JHEP}
\bibliography{bib-bh}

\end{document}